\newcommand{\Laplace}{\mathcal{L}}
\DeclareMathOperator*{\var}{var}
\newcommand{\vpi}{{\bm{\pi}}}
\newcommand{\vpisamp}{{\bm{\pi}}^*}
\newcommand{\etal}{{\it et al.}\xspace}
\newcommand{\deriv}[2]{\frac{d #1}{d #2}}
\newcommand{\E}{\mathbb{E}}
\newcommand{\D}[2]{\frac{\partial #1}{\partial #2}}
\newcommand{\inv}{^{-1}}
\newcommand{\vm}{{\mathbf{m}}}  \newcommand{\vn}{{\mathbf{n}}} \newcommand{\vx}{{\mathbf{x}}} \newcommand{\vp}{\mathbf{p}} \newcommand{\nmax}{M}
\newcommand{\valpha}{\vec{\alpha}}  \newcommand{\Beta}{\mathrm{Beta}}
\newcommand{\Dir}{\mathrm{Dir}}
\newcommand{\alphabetX}{\mathcal{X}}
\newcommand{\alphabetSize}{\ensuremath{\mathcal{A}}}
\newcommand{\ind}[1]{\mathbf{1}_{\left\{ #1\right\}}}                                 \newcommand{\indnobr}[1]{\mathbf{1}_{ #1}}                                 \newcommand{\Hnsb}{\hat H_{nsb}}
\newcommand{\Hnsbinf}{\hat H_{ansb}}
\newcommand{\Hplug}{\hat H_{\text{plugin}}}
\newcommand{\Hpym}{\hat H_{\text{PYM}}}
\newcommand{\PY}{{\mbox{PY}}\xspace} \newcommand{\DP}{{\mbox{DP}}\xspace} \newcommand{\PYM}{\ensuremath{PY\!M}} \newcommand{\digam}{{\psi_0}}
\newcommand{\dm}[1]{\ensuremath{\,\mathrm{d}{#1}}}  \newcommand{\trp}{^\top} \newcommand{\MAP}{\text{MAP}}
\newcommand{\figref}[1]{Fig.~\ref{fig:#1}}
\newcommand{\ConParFinite}{{a}}
\newcommand{\ConParInfinite}{{\alpha}}
\renewcommand{\cite}{\citep}
\title{Bayesian Entropy Estimation for Countable Discrete Distributions}
\author{\name Evan Archer$^\ast$
  \email earcher@utexas.edu\\
  \addr Max Planck Institute for Biological Cybernetics,  T\"{u}bingen, Germany
  \\
  \\
  \addr Center for Perceptual Systems\\
  The University of Texas at Austin, Austin, TX 78712, USA
      \AND 
  \name Il Memming Park\thanks{EA and IP contributed equally.}
  \email memming@austin.utexas.edu\\
  \addr Center for Perceptual Systems\\
  The University of Texas at Austin, Austin, TX 78712, USA
  \AND
  \name Jonathan W. Pillow
  \email pillow@utexas.edu\\
  \addr 
  Department of Psychology, Section of Neurobiology,\\
  Division of Statistics and Scientific Computation, and Center for Perceptual Systems\\
  The University of Texas at Austin, Austin, TX 78712, USA
}
\begin{document}
\maketitle

\begin{abstract}
  We consider the problem of estimating Shannon's entropy $H$ from
  discrete data, in cases where the number of possible symbols is
  unknown or even countably infinite.  The Pitman-Yor process, a
  generalization of Dirichlet process, provides a tractable prior
  distribution over the space of countably infinite discrete
  distributions, and has found major applications in Bayesian
  non-parametric statistics and machine learning. Here we show that it
  also provides a natural family of priors for Bayesian entropy
  estimation, due to the fact that moments of the induced posterior
  distribution over $H$ can be computed analytically. We derive
  formulas for the posterior mean (Bayes' least squares estimate) and
  variance under Dirichlet and Pitman-Yor process priors.  Moreover,
  we show that a fixed Dirichlet or Pitman-Yor process prior implies a
  narrow prior distribution over $H$, meaning the prior strongly
  determines the entropy estimate in the under-sampled regime. We
  derive a family of continuous mixing measures such that the
  resulting mixture of Pitman-Yor processes produces an approximately
  flat prior over $H$.  We show that the resulting Pitman-Yor Mixture
  (PYM) entropy estimator is consistent for a large class of
  distributions. We explore the theoretical properties of the resulting estimator, and show that it performs well both in simulation and in application to real data.
\end{abstract}

\begin{keywords}
entropy,
information theory,
Bayesian estimation,
Bayesian nonparametrics,
Dirichlet process,
Pitman--Yor process,
neural coding
\end{keywords}
   
\section{Introduction} 
Shannon's discrete entropy appears as a basic statistic in many
fields, from probability theory to engineering and even ecology and neuroscience.
While entropy may best be known as a theoretical quantity, its
accurate estimation from data is an important part of many applications.
For example, entropy is employed in the study of information processing in
neuroscience, where the coding of neurons is typically unknown~\cite{Strong98a, Barbieri04, Shlens07, Rolls1999}. Entropy estimates are used to quantify the coding properties of a neural systems, such as the channel capacity, or form a step in computing other information-theoretic quantities, such as the mutual information.  
Entropy is also used in statistics and machine learning for estimating dependency
structure and inferring causal relations~\cite{Chow1968,
Hlavackova-Schindler2007}, for example in molecular biology~\cite{Hausser09};
as a tool in the study of complexity and dynamics in physics
\cite{Letellier06}; and as a measure of diversity in ecology
\cite{Chao03} and genetics~\cite{Farach95}.
Each of these studies, confronted with data arising
from an unknown discrete distribution, seeks to estimate the entropy
rather than the distribution itself. The reason for this, fundamentally, is the difficulty of density estimation: it may not be feasible to collect enough data to usefully estimate the full distribution. The problem is not just that we may not have enough data to estimate the frequency of an event accurately. In the so-called ``undersampled regime'', we may not even observe all events that have non-zero frequency. In general, estimating a density in this setting is a hopeless endeavor. 

Estimating the entropy is much easier than estimating the full distribution. In fact, in many cases, entropy can be accurately estimated with fewer samples than the number of distinct symbols. Nevertheless, entropy estimation remains a difficult
problem: there is no unbiased estimator for entropy, and the maximum
likelihood estimator is severely biased for small datasets \cite{Paninski03}.  Many
previous studies have focused upon methods for computing and reducing
this bias
\cite{Miller55,Panzeri96,Strong98a,Paninski03,Grassberger08}. In this
paper we instead take a Bayesian approach, building upon the work of
\citet{Nemenman02}. Our basic strategy is to place a prior over the
space of discrete probability distributions, and then perform
inference using the induced posterior distribution over entropy. (See
\figref{graphicalmodel}).

We focus on the under-sampled regime, where the number of unique
symbols observed in the data is small in comparison with an unknown
(perhaps infinite) number of possible symbols.  While the true
distribution underlying any real dataset is undoubtedly finite,
formulating a model on the space of infinite-dimensional distributions
allows us to be agnostic about the true cardinality.  This assumption
leads to a convenient and tractable estimator that converges to the
true entropy even when the maximal support of the distribution is
known to be finite.  As an example, consider an arbitrary distribution
over $N$-grams on a binary alphabet.  While there are at most $2^N$
symbols, the true cardinality of the response distribution is
generally unknown to the experimenter.  A given distribution may be
supported on any number of symbols less than $2^N$.  In some
applications, there may not even be a known upper bound on the number
of symbols (e.g., the problem of estimating the number of undiscovered
species in a given region).  By using a prior with support for
countably infinite distributions, our model does not require {\it a
  priori} specification of the alphabet size.

The Pitman-Yor process (PYP), a two-parameter generalization of the
Dirichlet process (DP) \cite{Pitman97,Ishwaran03,Goldwater06},
provides an attractive family of priors in this setting, since: (1)
the posterior distribution over entropy has analytically tractable
moments; and (2) distributions drawn from a PYP can exhibit
power-law tails, a feature commonly observed in data from social,
biological, and physical systems~\cite{Zipf49,DudokDeWit1999,Newman05}.

We show that a PYP prior with fixed hyperparameters imposes a narrow
prior distribution over entropy, leading to severe bias and overly
narrow posterior credible intervals given a small dataset.  Our
approach, inspired by~\citet{Nemenman02}, is to introduce a family of
mixing measures over Pitman-Yor processes such that the resulting
Pitman-Yor Mixture (PYM) prior provides an approximately
non-informative (i.e., flat) prior over entropy.

The remainder of the paper is organized as follows.  In Section
\ref{sec:basics}, we introduce the entropy estimation problem and
review prior work. In Section 3, we introduce the Dirichlet and
Pitman-Yor processes and discuss key mathematical properties relating
to entropy. In Section 4, we introduce a novel entropy estimator based
on PYM priors and derive several of its theoretical properties.  In
Section 5, we show compare various estimators with applications to data.

\begin{figure}[t]
\centering
\includegraphics[width=2in]{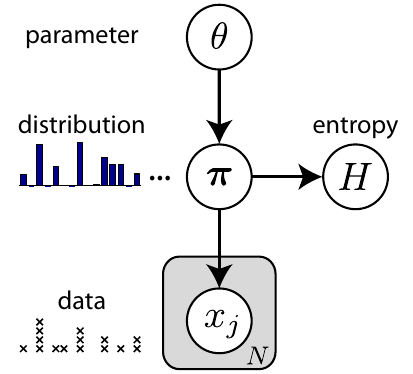}
\caption{Graphical model illustrating the ingredients for Bayesian
  entropy estimation.  Arrows indicate conditional dependencies
  between variables, and the gray ``plate'' denotes multiple copies of
  a random variable (with the number of copies $N$ indicated at
  bottom).  For entropy estimation, the joint probability distribution
  over entropy $H$, data $\vx=\{x_j\}$, discrete distribution $\vpi =
  \{\pi_i\}$, and parameter $\theta$ factorizes as:
  $p(H,\vx,\vpi,\theta) =
  p(H|\vpi)p(\vx|\vpi)p(\vpi|\theta)p(\theta)$.  Entropy is a
  deterministic function of $\vpi$, so $p(H|\vpi) = \delta(H-\sum_i
  \pi_i \log \pi_i).$ The Bayes least squares estimator corresponds to
  the posterior mean: $\E[H|\vx] = \iint p(H|\vpi) p(\vpi,\theta|\vx)
  d \vpi\, d\theta$.  }
\label{fig:graphicalmodel}
\end{figure}

\section{Entropy Estimation} \label{sec:basics}
Consider samples $\vx:=\left\{ x_j \right\}_{j=1}^{N}$ drawn
\textit{iid} from an unknown discrete distribution $\vpi := \left\{
  \pi_i \right\}_{i=1}^{\alphabetSize}$, $p(x_j = i) = \pi_i$, on a finite or (countably)
infinite alphabet $\alphabetX$ with cardinality $\alphabetSize$.
We wish to estimate the entropy of $\vpi$,
\begin{equation}
  \label{eq:entropy}
  H(\vpi) = -\sum_{i=1}^{\alphabetSize} \pi_i \log \pi_i.
\end{equation}
We are interested in the ``undersampled regime'', $N \ll \alphabetSize$, where many of the symbols remain unobserved. We will see that a naive approach to entropy estimation in this regime results in severely biased estimators, and briefly review approaches for correcting this bias. 
We then consider Bayesian techniques for entropy estimation in general before introducing the Nemenman--Shafee--Bialek (NSB) method upon which the remainder of the article will build. 

\subsection{Plugin estimator and bias-correction methods}

Perhaps the most straightforward entropy estimation technique is to
estimate the distribution $\vpi$ and then use the plugin formula
\eqref{eq:entropy} to evaluate its entropy.  The empirical
distribution $\hat \vpi = (\hat \pi_1, \ldots,
\hat\pi_{\alphabetSize})$ is computed by normalizing the observed
counts $\vn:=(n_1, \ldots, n_{\alphabetSize})$ of each symbol,
\begin{equation}
\hat \pi_k =n_k/N, \qquad n_k =
\sum_{i=1}^N\ind{x_i=k},
\end{equation}
for each $k \in \alphabetX$. Plugging this estimate for $\vpi$ into
\eqref{eq:entropy}, we obtain the so-called ``plugin'' estimator:
\begin{equation}
 \label{eq:Hplugin}
\Hplug = -\sum \hat \pi_i \log \hat \pi_i,
\end{equation}
which is also the maximum-likelihood estimator under categorical (or
multinomial) likelihood.

Despite its simplicity and desirable asymptotic properties, $\Hplug$
exhibits substantial negative bias in the undersampled regime. There
exists a large literature on methods for removing this bias, much of
which considers the setting in which $\alphabetSize$ is known and
finite. One popular and well-studied method involves taking a series
expansion of the bias~\cite{Miller55,Treves95,Panzeri96,Grassberger08}
and then subtracting it from the plugin estimate. Other recent
proposals include minimizing an upper bound over a class of linear
estimators~\cite{Paninski03}, and a James-Stein
estimator~\cite{Hausser09}. Recent work has also considered countably
infinite alphabets. The coverage-adjusted estimator
(CAE)~\cite{Chao03,Vu07} addresses bias by combining the
Horvitz-Thompson estimator with a nonparametric estimate of the proportion of total probability
mass (the ``coverage'') accounted for by the observed data $\vx$. In a
similar spirit, \citet{Zhang2012} proposed an estimator based on the
Good-Turing estimate of population size.

\subsection{Bayesian entropy estimation}\label{sec:bayesian:estimation}
The Bayesian approach to entropy estimation involves formulating a
prior over distributions $\vpi$, and then turning the crank of
Bayesian inference to infer $H$ using the posterior distribution.
Bayes' least squares (BLS) estimators take the form:
\begin{equation}\label{eq:basic:Bayesian}
  \hat H(\vx) = \E[H|\vx] = \int H(\vpi) p(H|\vpi) p(\vpi|\vx)\, \mathrm{d}\vpi,
\end{equation}
where $p(\vpi|\vx)$ is the posterior over $\vpi$ under some prior
$p(\vpi)$ and discrete likelihood $p(\vx|\vpi)$, and 
\begin{equation}
p(H|\vpi) =
\delta(H + \sum_i \pi_i \log \pi_i), \end{equation}
since $H$ is deterministically related to $\vpi$. To the extent that
$p(\vpi)$ expresses our true prior uncertainty over the unknown
distribution that generated the data, this estimate is optimal (in a
least-squares sense), and the corresponding credible intervals capture
our uncertainty about $H$ given the data.

For distributions with known finite alphabet size $\alphabetSize$, the
Dirichlet distribution provides an obvious choice of prior due to its
conjugacy with the categorical distribution.  It takes the form
\begin{equation}
  p_{Dir}(\vpi) \propto \prod_{i=1}^{\alphabetSize} \pi_i^{\ConParFinite-1},
\end{equation}
for $\vpi$ on the ${\alphabetSize}$-dimensional simplex ($\pi_i\geq
1$, $\sum \pi_i = 1$), where $\ConParFinite>0$ is a ``concentration'' parameter
\cite{Hutter02}.
Many previously proposed estimators can be viewed as Bayesian under a
Dirichlet prior with particular fixed choice of $\ConParFinite$.
See \citet{Hausser09}
for a historical overview of entropy estimators arising from specific
choices of $\ConParFinite$.

\subsection{Nemenman-Shafee-Bialek (NSB) estimator} \label{sec:NSB} 

In a seminal paper, \citet{Nemenman02} showed that for
finite distributions with known $\alphabetSize$, Dirichlet priors with
fixed $\ConParFinite$ impose a narrow prior distribution over entropy.
In the undersampled regime, Bayesian estimates based on such highly informative priors are essentially determined by the value of $\ConParFinite$.
Moreover, they have undesirably narrow posterior credible intervals, reflecting narrow prior uncertainty rather than
strong evidence from the data. (These estimators generally give
incorrect answers with high confidence!).  To address this problem, \citet{Nemenman02} suggested a
mixture-of-Dirichlets prior:
\begin{equation} \label{eq:Dir:prior} p(\vpi) = \int
  p_{\Dir}(\vpi|\ConParFinite) p(\ConParFinite) \dm{\ConParFinite},
\end{equation}
where $p_\Dir(\vpi|\ConParFinite)$ denotes a $\Dir(\ConParFinite)$ prior on $\vpi$,
and $p(\ConParFinite)$ denotes a set of mixing weights, given by
\begin{equation}\label{eq:Dir:wts}
p(\ConParFinite) \propto \frac{d}{d \ConParFinite} \E[H|\ConParFinite]
    = \alphabetSize \psi_1(\alphabetSize \ConParFinite+1) - \psi_1(\ConParFinite+ 1),
\end{equation}
where $\E[H|\ConParFinite]$ denotes the expected value of $H$ under a
$\Dir(\ConParFinite)$ prior, and $\psi_1(\cdot)$ denotes the tri-gamma
function.  To the extent that $p(H|\ConParFinite)$ resembles a delta
function, \eqref{eq:Dir:prior} and \eqref{eq:Dir:wts} imply a uniform
prior for $H$ on $[0,\log \alphabetSize]$. The BLS estimator under the
NSB prior can be written:
\begin{align}\label{eq:HhatNSB}
  \Hnsb &= \E[H|\vx] = \iint H(\vpi) p(\vpi|\vx,\ConParFinite)\, p(\ConParFinite|\vx) \dm{\vpi} \dm{\ConParFinite}
  \nonumber\\
      &=  \int \E[H|\vx,\ConParFinite] \frac{p(\vx|\ConParFinite) p(\ConParFinite)}{p(\vx)}\,\dm{\ConParFinite},
\end{align}
where $E[H|\vx,\ConParFinite]$ is the posterior mean under a $\Dir(\ConParFinite)$
prior, and $p(\vx|\ConParFinite)$ denotes the evidence, which has a P\'olya
distribution~\cite{Minka03}:
\begin{align}
p(\vx|\ConParFinite)
    &= \int p(\vx|\vpi)p(\vpi|\ConParFinite) \dm{\vpi}
    \nonumber\\
    &= \frac{(N!)
  \Gamma(\alphabetSize\ConParFinite)}{\Gamma(\ConParFinite)^\alphabetSize \Gamma ( N + \alphabetSize\ConParFinite)} 
\prod_{i=1}^\alphabetSize \frac{\Gamma(n_i + \ConParFinite)}{ n_i!}.
\end{align}

The NSB estimate $\Hnsb$ and its posterior variance are fast to
compute via 1D numerical integration in $\ConParFinite$ using closed-form
expressions for the first two moments of the posterior distribution of
$H$ given $\ConParFinite$.
The forms for these moments are
discussed in~\citet{Wolpert95,Nemenman02}, but the full formulae are
not explicitly shown.  Here we state the results:
\begin{align}
  \E[H|\vx,\ConParFinite] &= \psi_0(\tilde N+1) - \sum_i \frac{\tilde
    n_i}{\tilde N} \psi_0(\tilde n_i+ 1)
    \label{eq:Dir:posterior:mean}
    \\
   \E[H^2|\vx,\ConParFinite] &= 
	\sum_{i\neq k}
	\frac{ \tilde n_i \tilde n_k }{(\tilde N+1)\tilde N}
	    I_{i,k}
	+
	\sum_i 
	\frac{(\tilde n_i+1)\tilde n_i}{(\tilde N+1)\tilde N}
	    J_i
    \label{eq:Dir:posterior:var}
	  \\
	  I_{i,k} &=
	\left(
	  \psi_0(\tilde n_k+1)-\psi_0(\tilde N + 2)
	\right) 
	\left(
	    \psi_0(\tilde n_i+1)-\psi_0(\tilde N + 2)
	\right)  -\psi_1(\tilde N + 2) 
    \nonumber \\
	J_i &= 
	(\psi_0( \tilde n_i + 2) - 
	  \psi_0( \tilde N + 2))^2 +
	  \psi_1( \tilde n_i + 2) - \psi_1(\tilde N + 2),
    \nonumber
\end{align}
where $\tilde n_i = n_i + \ConParFinite$ are counts plus prior
``pseudocount'' $\ConParFinite$, $\tilde N = \sum \tilde n_i$ is the total of
counts plus pseudocounts, and $\psi_n$ is the polygamma of $n$-th
order (i.e., $\psi_0$ is the digamma function).  Finally,
$\var[H|\vn,\ConParFinite] = \E[H^2|\vn,\ConParFinite] - \E[H|\vn,\ConParFinite]^2$. We
derive these formulae in the Appendix, and in addition provide an
alternative derivation using a size-biased sampling formulae discussed
in Section \ref{sec:prior:DPPY}.

\subsection{Asymptotic NSB estimator}\label{sec:ANSB}

Nemenman \etal have proposed an extension of the NSB estimator to
countably infinite distributions (or distributions with unknown
cardinality), using a zeroth order approximation to $\Hnsb$ in the
limit $\alphabetSize \to \infty$ which we refer to as
asymptotic-NSB (ANSB)  \cite{Nemenman04,Nemenman11},
\begin{equation}
\Hnsbinf = 2 \log(N) + \psi_0(N-K) - \psi_0(1) - \log(2) \label{eq:ANSB},
\end{equation}
where $K$ is the number of distinct symbols in the sample.
Note that the ANSB estimator is designed specifically for an extremely undersampled regime ($K \sim N$), which we refer to as the ``ANSB approximation regime''. The fact that ANSB diverges with $N$ in the well-sampled regime (as noted by [Vu07]) is therefore consistent with its design. In our experiments with ANSB in subsequent sections, we follow  \cite{Nemenman11} and define the ANSB approximation regime to be that region such that $E[K_N] / N > 0.9$, where $K_N$ is the number of unique symbols appearing in a sample of size $N$.

\section{Dirichlet and Pitman-Yor Process Priors}\label{sec:prior:DPPY}

To construct a prior over unknown or countably infinite discrete
distributions, we borrow tools from nonparametric Bayesian
statistics. 
The Dirichlet Process (DP) and Pitman-Yor process (PYP)
define stochastic processes whose samples are countably infinite discrete
distributions~\cite{Ferguson1973,Pitman97}. 
A sample from a DP or PYP may be written as 
$\sum_{i=1}^\infty \pi_i \delta_{\phi_i}$, where now $\vpi = \{\pi_i\}$
denotes a countably infinite set of `weights' on a set of
atoms $\{\phi_i\}$ drawn from some base probability measure, where
$\delta_{\phi_i}$ is a delta function on the atom $\phi_i$.\footnote{ Here, we will assume the base measure is non-atomic, so
    that the atoms $\phi_i$'s are distinct with probability one. This
    allows us to ignore the base measure, making entropy of the
    distribution equal to the entropy of the weights $\vpi$.}
We use DP and PYP to define a prior distribution on the
infinite-dimensional simplex.
The prior distribution over $\vpi$ under the DP or PYP is technically
called the GEM \footnote{GEM stands for ``Griffiths, Engen and McCloskey'', after three researchers who considered these ideas early on \cite{Ewens1990}.} distribution or the two-parameter Poisson-Dirichlet
distribution, but we will abuse terminology by referring to both the
process and its associated weight distribution by the same symbol, DP
or PY~\cite{Ishwaran02}.

The DP distribution over $\vpi$ results from a limit of the (finite) Dirichlet distribution where alphabet size grows and concentration parameter shrinks: $\alphabetSize \rightarrow \infty$ and $\ConParFinite\rightarrow 0$ s.t.\ $\ConParFinite \alphabetSize \rightarrow \ConParInfinite$.  
The PYP distribution over $\vpi$ generalizes the DP to
allow power-law tails, and includes DP as a special
case~\cite{Kingman75,Pitman97}.
For $\PY(d,\ConParInfinite)$ with $d \neq 0$, the tails
approximately follow a power-law: $\pi_i \propto
\left(i\right)^{-\frac{1}{d}}$~(pp.~867,~\citet{Pitman97}).\footnote{Note that the power-law exponent is given incorrectly in
  \cite{Goldwater06,Teh06b}.}  Many natural phenomena such as city
size, language, spike responses, etc., also exhibit power-law tails
\cite{Zipf49,Newman05}. Fig.~\ref{fig:powerlaw} shows two such
examples, along with a sample drawn from the
best-fitting DP and PYP distributions.

Let $\PY(d,\ConParInfinite)$ denote the PYP with {\it discount}
parameter $d$ and {\it concentration} parameter $\ConParInfinite$
(also called the ``Dirichlet parameter''), for $d\in[0,1),
\ConParInfinite>-d$.  When $d = 0$, this reduces to the Dirichlet
process, $\DP(\ConParInfinite)$.  To gain intuition for the DP and
PYP, it is useful to consider typical samples $\vpi$ with weights
$\{\pi_i\}$ sorted in decreasing order of probability, so that
$\pi_{(1)} > \pi_{(2)} > \cdots$.  The concentration parameter
$\ConParInfinite$ controls how much of the probability mass is concentrated in
the first few samples, that is, in the head instead of the tail of the
sorted distribution.  For small $\ConParInfinite$, the first few weights carry
most of the probability mass, whereas for large $\ConParInfinite$, the
probability mass is more spread out so that $\vpi$ is more uniform. As
noted above, the discount parameter $d$ controls the shape of the
tail, with larger $d$ giving heavier power-law tails, and $d=0$ giving
exponential tails.

We can draw samples $\vpi \sim \PY(d,\ConParInfinite)$
using an infinite sequence of independent Beta random variables in a
process known as ``stick-breaking''~\cite{Ishwaran01}:
\begin{equation}
  \label{eq:PYstickbreak}
\beta_i
   \sim \Beta(1-d, \ConParInfinite+id), \qquad \tilde \pi_i = \prod_{j=1}^{i-1} (1-\beta_j) \beta_i,
\end{equation}
where $\tilde \pi_i$ is known as the $i$'th {\it size-biased
permutation} from $\vpi$~\cite{Pitman96}.  The $\tilde \pi_i$ sampled
in this manner are not strictly decreasing, but decrease on average
such that $\sum_{i=1}^\infty \tilde \pi_i = 1$ with probability 1
\cite{Pitman97}.

\begin{figure}[t!]
\centering
\includegraphics[width=6in]{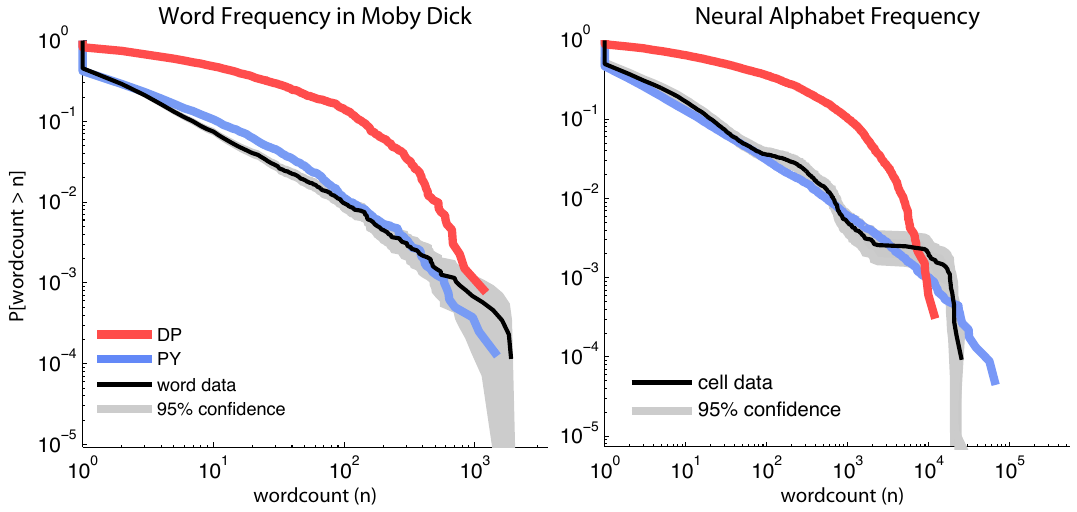}
\caption{Empirical cumulative distribution functions of words in
  natural language (above) and neural spike patterns (below).  We compare samples from the DP (red) and
  PYP (blue) priors for two datasets with heavy tails (black). In both
  cases, we compare the empirical CDF estimated from data to
  distributions drawn from DP and PYP using the ML values of $\ConParInfinite$
  and $(d,\ConParInfinite)$, respectively. For both datasets, PYP better
  captures the heavy-tailed behavior of the data. \textbf{Left:}
  Frequency of $N = 217826$ words in the novel Moby Dick by Herman
  Melville.  \textbf{Right:} Frequencies among $N = 1.2\times 10^6$
  neural spike words from $27$ simultaneously-recorded retinal
  ganglion cells, binarized and binned at 10ms.
}
\label{fig:powerlaw}
\end{figure}

\subsection{Expectations over DP and PYP priors}
A key virtue of PYP priors for our purposes is a mathematical property
called {\it invariance under size-biased sampling}, which allows us to
convert expectations over $\vpi$ on the infinite-dimensional simplex
(which are required for computing the mean and variance of $H$ given
data) into one- or two-dimensional integrals with respect to the
distribution of the first two size-biased samples
\cite{Perman92,Pitman96}.
\begin{proposition}[Expectations with first two size-biased samples]\label{prop:sizebiased:mean}
For $\vpi \sim \PY(d,\ConParInfinite)$, 
\begin{align}
\label{eq:sizebiasedexp}
\E_{(\vpi|d,\ConParInfinite)} \left[ \sum_{i=1}^\infty f(\pi_i) \right] 
    &= \E_{(\tilde \pi_1|d,\ConParInfinite)} \left[ \frac{f(\tilde
        \pi_1)}{\tilde \pi_1} \right], \\
\label{eq:sizebiasedexpp}
\E_{(\vpi|d,\ConParInfinite)} \left[ \sum_{i,j \neq i} g(\pi_i, \pi_j) \right]  
    &= \E_{(\tilde \pi_1,\tilde \pi_2|d,\ConParInfinite)} 
    \left[
      \frac{g(\tilde \pi_1,\tilde \pi_2)}{\tilde \pi_1 \tilde \pi_2} (1 - \tilde\pi_1)
    \right],
  \end{align}
where $\tilde \pi_1$ and $\tilde \pi_2$ are the first two
size-biased samples from $\vpi$.
\end{proposition}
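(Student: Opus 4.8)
The plan is to reduce both identities to the defining property of size-biased sampling, combined with the fact --- established by \citet{Perman92,Pitman96} --- that the stick-breaking weights $\tilde\pi_i$ of \eqref{eq:PYstickbreak} are exactly a size-biased permutation of a $\PY(d,\alpha)$-distributed $\vpi$. Concretely, I would first recall that a size-biased pick from $\vpi$ is an index-valued random variable $J_1$ with $P(J_1 = j \mid \vpi) = \pi_j$, and that, conditionally on $\vpi$, $\tilde\pi_1$ is distributed as $\pi_{J_1}$; likewise the second size-biased sample corresponds to $J_2$ drawn from the renormalized remaining weights, $P(J_2 = j \mid \vpi, J_1) = \pi_j / (1 - \pi_{J_1})$ for $j \neq J_1$, with $\tilde\pi_2$ distributed as $\pi_{J_2}$.

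For \eqref{eq:sizebiasedexp}, I would condition on $\vpi$ and note that for any measurable $h \ge 0$,
\[
\E\!\left[ h(\tilde\pi_1) \,\middle|\, \vpi \right] = \sum_{j} \pi_j\, h(\pi_j) ,
\]
so the choice $h(x) = f(x)/x$ cancels the $\pi_j$ factors and leaves $\sum_j f(\pi_j)$; taking the outer expectation over $\vpi$ gives the claim. For \eqref{eq:sizebiasedexpp}, the same device applied to the pair $(\tilde\pi_1,\tilde\pi_2)$ yields
\[
\E\!\left[ h(\tilde\pi_1,\tilde\pi_2) \,\middle|\, \vpi \right] = \sum_{j_1} \pi_{j_1} \sum_{j_2 \neq j_1} \frac{\pi_{j_2}}{1 - \pi_{j_1}}\, h(\pi_{j_1},\pi_{j_2}) ,
\]
and the choice $h(x,y) = g(x,y)\,(1-x)/(xy)$ cancels both the weights and the $(1-\pi_{j_1})$ denominator, leaving $\sum_{j_1}\sum_{j_2\neq j_1} g(\pi_{j_1},\pi_{j_2})$; the outer expectation over $\vpi$ then finishes it.

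The routine bookkeeping is the justification of interchanging the (infinite) sums with the expectations. Since the proposition will only be applied with nonnegative integrands --- e.g.\ $f(\pi) = -\pi\log\pi \ge 0$, and products of such terms for the second moment of $H$ --- Tonelli's theorem licenses all the rearrangements and the series converge monotonically (possibly to $+\infty$). The step I expect to require the most care in the writing is the appeal to size-biased invariance itself: I must state precisely that \eqref{eq:PYstickbreak} realizes the size-biased permutation of the ranked $\PY(d,\alpha)$ weights --- not merely some exchangeable reordering --- so that $(\tilde\pi_1,\tilde\pi_2)$ appearing in the proposition genuinely are the first two size-biased samples. This is exactly the content of the Perman--Pitman--Yor characterization of the two-parameter Poisson--Dirichlet family, which I would cite rather than reprove. $\qquad\blacksquare$
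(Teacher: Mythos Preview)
Your proposal is correct and follows essentially the same route as the paper: condition on $\vpi$, use the defining conditional law of the size-biased picks to express $\E[h(\tilde\pi_1)\mid\vpi]$ (resp.\ $\E[h(\tilde\pi_1,\tilde\pi_2)\mid\vpi]$) as the weighted sum, choose $h$ to cancel the weights, and take the outer expectation with Fubini/Tonelli. The only cosmetic difference is direction of writing for \eqref{eq:sizebiasedexpp}: the paper starts from the double sum and multiplies and divides by $p(\tilde\pi_1=\pi_i,\tilde\pi_2=\pi_j)=\pi_i\pi_j/(1-\pi_i)$ to recognize it as a conditional expectation, whereas you start from the expectation and pick $h$; one minor remark is that the identity itself is purely a statement about size-biased sampling and does not require the Perman--Pitman--Yor characterization---that citation is only needed when you later want the explicit Beta law of $\tilde\pi_1$ to evaluate the right-hand side.
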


The first result \eqref{eq:sizebiasedexp} appears in~\cite{Pitman97},
and an analogous proof can be constructed for \eqref{eq:sizebiasedexpp} (see
Appendix).  

The direct consequence of this proposition is that the first two 
moments of $H(\vpi)$ under the PYP and DP priors have closed forms \footnote{Note that \eqref{eq:PY:prior:entropy:mean} and \eqref{eq:PY:prior:entropy:var} follow from \eqref{eq:Dir:posterior:mean} and \eqref{eq:Dir:posterior:var}, respectively, under the PY limit.},
\begin{align}
\E[H|d,\ConParInfinite] &= \digam(\ConParInfinite+1) - \digam(1-d),
\label{eq:PY:prior:entropy:mean}
\\
\var [H|d,\ConParInfinite] &=
    \frac{\ConParInfinite+d}{(\ConParInfinite+1)^2 (1-d)} 
    + \frac{1-d}{\ConParInfinite+1}\psi_1(2-d) - \psi_1(2+\ConParInfinite).
\label{eq:PY:prior:entropy:var}
\end{align}
The derivation can be found in the appendix.

\subsection{Expectations over DP and PYP posteriors}
\label{sec:posterior}
A useful property of PYP priors (for multinomial observations) is that the
posterior $p(\vpi|\vx,d,\ConParInfinite)$ takes the form of a mixture of a
Dirichlet distribution (over the observed symbols) and a Pitman-Yor
process (over the unobserved symbols)~\cite{Ishwaran03}. This makes
the integrals over the infinite-dimensional simplex tractable, and as
a result we obtain closed-form solutions for the posterior mean and
variance of $H$.  Let $K$ be the number of unique symbols observed in
$N$ samples, i.e., $K=\sum_{i=1}^{\alphabetSize} \ind{n_i >
  0}\label{eq:NumTables}$\footnote{We note that the quantity $K$ has been studied in Bayesian nonparametrics in its own right, for instance to study species diversity in ecological applications \cite{Favaro2009}.}.  Further, let $\ConParInfinite_i = n_i - d$, $N =
\sum n_i$, and $A = \sum \ConParInfinite_i = \sum_in_i - Kd = N-Kd$.  Now,
following~\cite{Ishwaran02} we write the posterior as an infinite
random vector $\vpi|\vx,d,\ConParInfinite = (p_1, p_2, p_3, \dots, p_K,
p_\ast \vpi')$, where
\begin{align}\label{eq:postPY}
(p_1, p_2, \dots, p_K, p_\ast)  &\sim \Dir(n_1 - d, \dots,
n_K - d, \ConParInfinite + Kd)
\\\nonumber
\vpi' := (\pi_1, \pi_2, \pi_3, \dots) &\sim \PY(d,\ConParInfinite + Kd).
\end{align}
The posterior mean $E[H|\vx, d,\ConParInfinite]$ is given by,
\begin{equation}\label{eq:posterior:mean}
  \E[ H| \ConParInfinite, d, \vx] = \psi_0(\ConParInfinite+N+1) 
  - \frac{\ConParInfinite+Kd}{\ConParInfinite+N}\psi_0(1-d) 
  - \frac{1}{\ConParInfinite+N} \left[\sum_{i=1}^K
    (n_i - d) \psi_0(n_i - d+1)\right].
\end{equation}

The variance, $\var[H|\vx, d,\ConParInfinite]$, also has an analytic closed form which is
fast to compute. As we discuss in detail in Appendix~\ref{sec:appendix:posterior}, $\var[H|\vx, d,\ConParInfinite]$ may be
expressed in terms of the first two moments of  $p_\ast$, $\vpi$, and 
$\vp = (p_1, \ldots, p_K)$ appearing in the posterior
\eqref{eq:postPY}. Applying the law of total variance and using the
independence properties of the posterior, we find:
\begin{align}
  \label{eq:PY:posterior:variance_indraft}
 \var[ H| d,\ConParInfinite] &=
 \E_{p_{\ast}}[(1-p_\ast)^2]\var_{\vp}[H(\vp)] + \E_{p_\ast}[
 p_{\ast}^2]\var_{\vpi}[H(\vpi)]  
 \nonumber\\
 &\quad
 + \E_{p_\ast}[\Omega^2(p_\ast)]  - \E_{p_\ast}[\Omega(p_\ast)]^2,
\end{align}
where $\Omega(p_\ast) =  (1-p_\ast) \E_{\vp}\left[ H(\vp) \right] +
p_\ast\E_{\vpi}\left[H(\vpi)\right] + H(p_\ast)$, and $H(p_\ast)= -p_\ast \log(p_\ast)-(1-p_\ast)\log(1-p_\ast)$. To specify
$\Omega(p_\ast)$, we let $\mathbf{A} = \E_{\vp}\left[ H(\vp)\right]$,
$\mathbf{B} = \E_{\vpi}\left[ H(\vpi)\right]$ so that,
\begin{align*}
\E[\Omega] &=   \E_{p_\ast}[ 1-p_\ast] \E_{\vp}\left[ H(\vp) \right] +   \E_{p_\ast}[ p_\ast]\E_{\vpi}\left[
  H(\vpi)\right] + H(p_\ast),
\\
\E[\Omega^2] &= 2  \E_{p_\ast}[p_\ast H(p_\ast)][\mathbf{B}-\mathbf{A}] 
+ 2\mathbf{A} \E_{p_\ast}[ H(p_\ast)] + \E_{p_\ast}[ h^2(p_\ast) ]
\\ & + \E_{p_\ast}[ p_\ast^2]\left[\mathbf{B}^2 - 2\mathbf{A}\mathbf{B}\right]
+  2\E_{p_\ast}[p_\ast]\mathbf{A}\mathbf{B} +  \E_{p_\ast}[ (1-p_\ast)^2]\mathbf{A}^2.
\end{align*}

\section{Entropy inference under DP and PYP priors}
The posterior expectations computed in Section \ref{sec:posterior}  provide a class of entropy estimators for distributions with countably-infinite support. For each choice of $(d,\ConParInfinite)$, $\E[H|\ConParInfinite,d,\vx]$ is the posterior mean under a $PY(d,\ConParInfinite)$ prior, analogous to the fixed-$\ConParInfinite$ Dirichlet priors discussed in Section \ref{sec:bayesian:estimation}. Unfortunately, fixed $PY(d,\ConParInfinite)$ priors also carry the same difficulties as fixed Dirichlet priors. A fixed-parameter $\PY(d,\ConParInfinite)$ prior on $\vpi$ results in a highly concentrated prior distribution on entropy (\figref{Hpriorvsad}). 

We address this problem by introducing a mixture prior $p(d,\ConParInfinite)$
on $\PY(d,\ConParInfinite)$ under which the implied prior on entropy is flat.\footnote{Notice, however, that by constructing a flat prior on
    entropy, we do not obtain an objective prior.  Here, we are not
    interested in estimating the underlying high-dimensional
    probabilities $\{\pi_i\}$, but rather in estimating a single
    statistic.  An objective prior on the model parameters is not
    necessarily optimal for estimating entropy: entropy is not a
    parameter in our model.  In fact, Jeffreys' prior for multinomial
    observations is exactly a Dirichlet distribution with a fixed
    $\alpha = 1/2$. As mentioned in the text, such Bayesian priors
    are highly informative about the entropy.}
We then define the BLS entropy estimator under this mixture prior, the Pitman-Yor Mixture (PYM) estimator, and discuss some of its theoretical properties. Finally, we turn to the computation of PYM, discussing methods for sampling, and numerical quadrature integration.

\begin{figure}[t]
\centering
\includegraphics[width=\columnwidth]{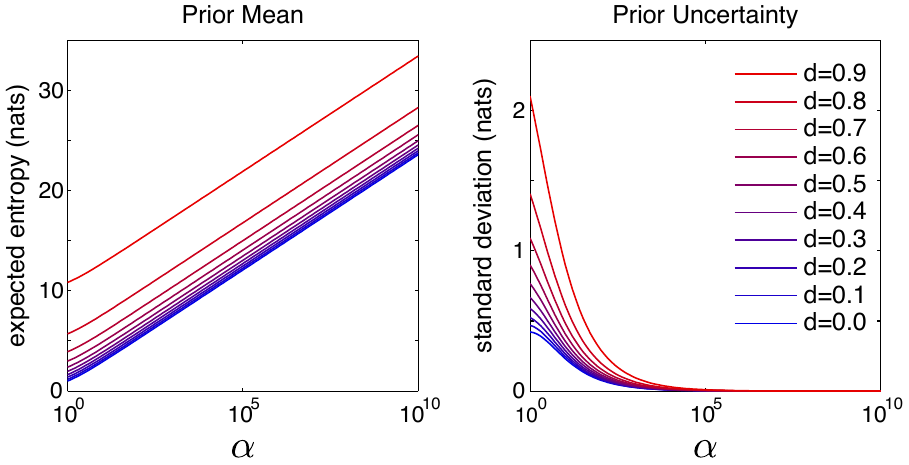}
\caption{Prior entropy mean and variance
\eqref{eq:PY:prior:entropy:mean} and \ref{eq:PY:prior:entropy:var} as a function of $\ConParInfinite$ and $d$.
Note that entropy is approximately linear in $\log \ConParInfinite$.  For large values
of $\ConParInfinite$, $p(H(\vpi)|d,\ConParInfinite)$ is highly concentrated around the mean.}
\label{fig:Hpriorvsad}
\end{figure}

\subsection{Pitman-Yor process mixture (PYM) prior}
One way of constructing a flat mixture prior is to follow the approach of \citet{Nemenman02} by setting $p(d,\ConParInfinite)$ proportional to the
derivative of the expected entropy \eqref{eq:PY:prior:entropy:mean}. 
Unlike NSB, we have two parameters through which to control the
prior expected entropy.
For instance, large prior (expected) entropies can arise either
from large values of $\ConParInfinite$ (as in the DP) or from values of $d$
near 1 (see \figref{Hpriorvsad}A).
We can explicitly control this
trade-off by reparameterizing PYP as follows,
\begin{equation}
  \label{eq:reparameterization}
  h = \digam(\ConParInfinite+1)-\digam(1-d),
  \quad
  \gamma = \frac{\digam(1)-\digam(1-d)}{\digam(\ConParInfinite+1)-\digam(1-d)},
\end{equation}
where $h>0$ is equal to the expected prior entropy
\eqref{eq:PY:prior:entropy:mean}, and $\gamma \in [0,\infty) $ captures
prior beliefs about tail behavior (\figref{pyparam}A).  For
$\gamma=0$, we have the DP (i.e., $d=0$, giving $\vpi$ with
exponential tails), while for $\gamma=1$ we have a $\PY(d,0)$ process
(i.e., $\ConParInfinite=0$, yielding $\vpi$ with power-law tails). In the limit where $\alpha\to -1$ and $d\to 1$, $\gamma \to \infty$. Where
required, the inverse transformation to standard PY parameters is
given by: $ \ConParInfinite = \digam\inv \left(h(1-\gamma)+\digam(1)\right) -
1$, $d = 1-\digam\inv\left(\digam(1)-h \gamma\right),$ where
$\digam\inv(\cdot)$ denotes the inverse digamma function.

We can construct an (approximately) flat improper distribution over $H$ on
$[0,\infty]$ by setting $p(h,\gamma) = q(\gamma)$ for all $h$, where
$q$ is any density on $[0,\infty)$. We call this the Pitman-Yor
process mixture (PYM) prior.
The induced prior on entropy is thus:
\begin{equation}
    p (H) = \iint p(H|\vpi) p(\vpi| \gamma,h) p(\gamma,h)
    \dm{\gamma}\dm{h}, 
\end{equation}
where $p(\vpi| \gamma,h)$ denotes a PYP on $\vpi$ with parameters
$\gamma,h$. We compare only three choices of $q(\gamma)$ here. However, the prior $q(\gamma)$ is not fixed, but may be adapted to reflect prior beliefs about the dataset at hand.  A $q(\gamma)$ that places probability mass on larger $\gamma$ (near $1$) results in a prior that prefers heavy-tailed behavior and high entropy, whereas weight on small $\gamma$ prefers exponential-tailed distributions. As a result, priors with more mass on large $\gamma$ will also tend to yield wider credible intervals and higher estimates of entropy. PYM mixture priors resulting from different choices of $q(\gamma)$ are all approximately flat on $H$, but each favors distributions with different tail behavior; the ability to select $q(\gamma)$ greatly enhances the flexibility of PYM, allowing the practitioner to adapt it to her own data. 

Fig.~\ref{fig:pyparam}B shows samples from this prior under three
different choices of $q(\gamma)$, for $h$ uniform on $[0,3]$. For the
experiments, we use $q(\gamma) = \exp(-\frac{10}{1-\gamma})
\mathbf{1}_{\{\gamma < 1\}}$ which yields good results by weighting
less on extremely heavy-tailed distributions\footnote{In particular,
  the restriction $\gamma < 1$ omits the corner $d\rightarrow1$ and
  $\alpha\rightarrow -d$. In this region, one can obtain arbitrarily
  large prior variance over $H$ for a given mean. However, such priors
  have very heavy tails and seem poorly-suited to data with finite or
  exponential tails, and we therefore do not explore them further
  here.}. Combined with the likelihood, the posterior $p(d,
\ConParInfinite|\vx) \propto p(\vx|d, \ConParInfinite) p(d,
\ConParInfinite)$ quickly concentrates as more data are given, as
demonstrated in Fig. \ref{fig:posterior:evidence:convergence}.

\begin{figure}[t]
\centering
\includegraphics[width=4in]{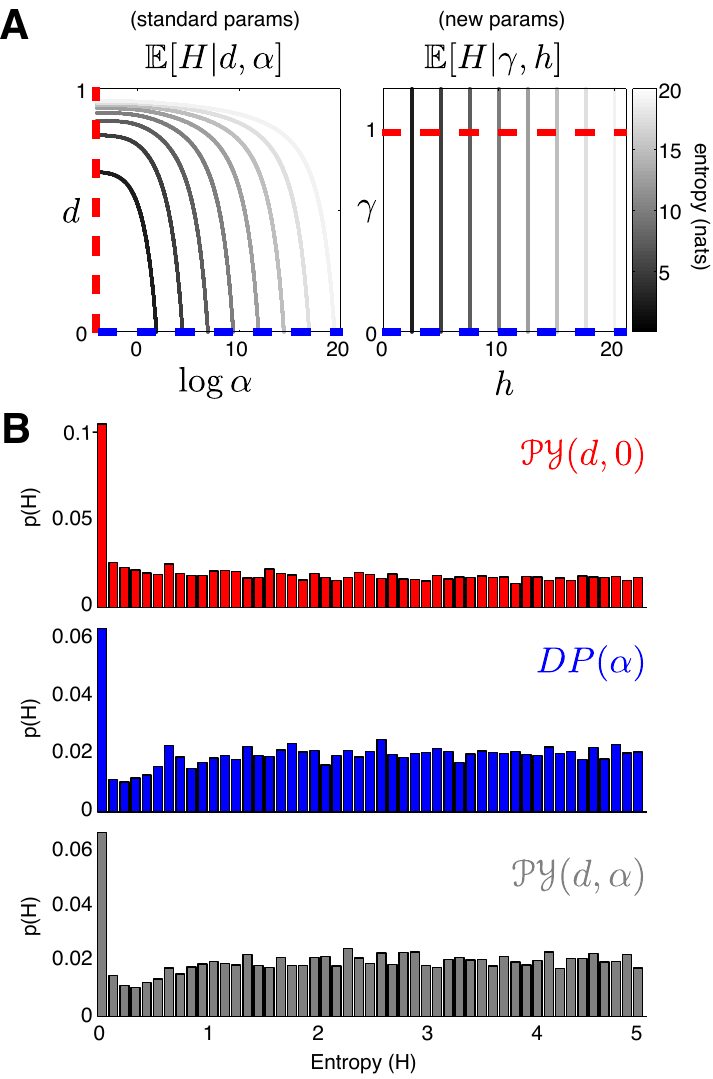}
\caption{Prior over expected entropy under Pitman-Yor process prior.
  \textbf{(A)} Left: expected entropy as a function of the natural parameters 
  $(d,\ConParInfinite)$.  Right: expected entropy as a function of transformed
  parameters $(h,\gamma)$. 
  \textbf{(B)} Sampled prior distributions ($N=5e3$) over
  entropy implied by three different PYM priors over $\alpha$ and $d$. Each PYM prior is a mixture of a different PYP: $\PY(d,0)$ (red),
  $\PY(d, \ConParInfinite)$ (grey), and $\PY(0, \ConParInfinite) = \DP(\ConParInfinite)$
  (blue).  Note that the ``true'' $p(H)$ is an improper prior supported on $[0,\infty)$. We visualize the sampled distributions only on the range from $0$ to  $5$ nats, since sampling from $\PY$ becomes prohibitively expensive with increasing expected entropy (especially as $d\to 1$).}

\label{fig:pyparam} 
\end{figure}

\begin{figure*}[t!h!p]
    \centering
    \includegraphics[width=5in]{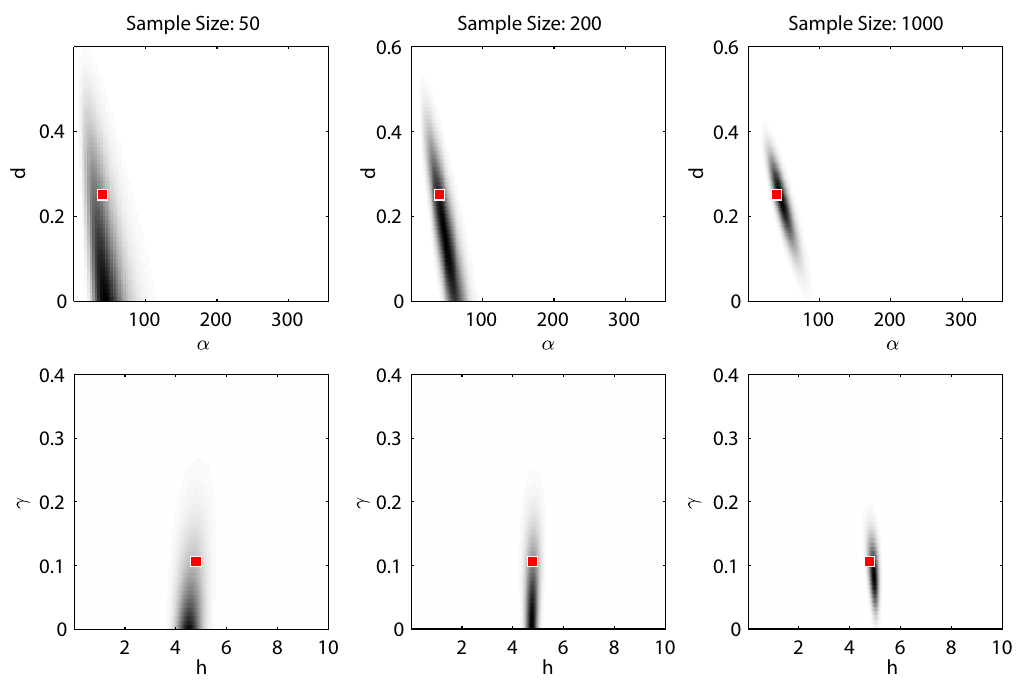}
    \caption{
	Convergence of $p(d,\ConParInfinite|\vx)$ for increasing sample size.
	Both parameterization $(d, \ConParInfinite)$ and $(\gamma, h)$ are shown.
	Data are simulated from a $\PY(0.25, 40)$ whose parameters are indicated by the red dot.
    }
    \label{fig:posterior:evidence:convergence}
\end{figure*}

\subsection{The Pitman-Yor Mixture Entropy Estimator}
Now that we have determined a prior on the infinite simplex, we turn
to the problem of inference given observations $\vx$. The Bayes least squares entropy estimator under the mixture prior $p(d,\ConParInfinite)$, the Pitman-Yor Mixture (PYM) estimator, takes the form
\begin{equation}\label{eq:HhatPYM}
\Hpym = \E[H|\vx] = \int 
    \E[H|\vx,d,\ConParInfinite] \frac{p(\vx|d,\ConParInfinite)
    p(d,\ConParInfinite)}{p(\vx)}\dm{(d,\ConParInfinite)},
\end{equation}
where $\E[H|\vx,d,\ConParInfinite]$ is the expected posterior entropy for a fixed $(d,\ConParInfinite)$ (see Section \ref{sec:posterior}). The quantity $p(\vx|d,\ConParInfinite)$ is the evidence, given by 
\begin{equation}\label{eq:evidence}
   p(\vx|d,\ConParInfinite) =
	\frac{
	    \left(
		\prod_{l=1}^{K-1} (\ConParInfinite + ld)
	    \right)
	    \left(
		\prod_{i=1}^K \Gamma(n_i - d)
	    \right)
	    \Gamma(1 + \ConParInfinite)
	}{
	    \Gamma(1 - d)^K
	    \Gamma(\ConParInfinite + N)
	}.
\end{equation}  

We can obtain posterior credible intervals for $\Hpym$ by estimating the posterior
variance $\E[(H-\Hpym)^2|\vx]$. The estimate takes the same form as
\eqref{eq:HhatPYM}, except that we replace $\E[H|\vx, d,\ConParInfinite]$ with
$\var[H|\vx, d,\ConParInfinite]$ in the integrand. 

\subsection{Computation}\label{sec:computation}
Due to the improperness of the prior $p(d,\ConParInfinite)$ and the requirement
of integrating over all $\ConParInfinite>0$ (eq.~\eqref{eq:HhatPYM}), it is not
obvious that the PYM estimate $\Hpym$ is computationally
tractable. In this section we discuss techniques for
efficient and accurate computation of $\Hpym$. First, we outline a
compressed data representation we call the ``multiplicities''
representation, which substantially reduces computational cost. Then,
we outline a fast method for performing the numerical integration over
a suitable range of $\ConParInfinite$ and $d$.

\subsubsection{Multiplicities}
\label{sec:multiplicities}
Computation of the expected entropy $\E[H|\vx,d,\ConParInfinite]$ can
be carried out more efficiently using a representation in terms of
\textit{multiplicities}, a compressed statistic often used under other
names (for instance the {\it empirical histogram distribution
    function} \cite{Paninski03}). Multiplicities are the number of
symbols that have occurred with a given frequency in the sample.
Letting $m_k = |\{i : n_i = k\}|$ denote the total number of symbols
with exactly $k$ observations in the sample gives the compressed
statistic $\vm = \left[m_0,m_1, \dots, m_\nmax\right]\trp$, where
$\nmax$ is the largest number of samples for any symbol. Note that the
dot product $[0,1,\ldots, \nmax]\trp\vm = N$, is the total number of
samples.

The multiplicities representation significantly reduces the time and
space complexity of our computations for most datasets, as we need
only compute sums and products involving the number symbols with
distinct frequencies (at most $\nmax$), rather than the total number
of symbols $K$.  In practice we compute all expressions not explicitly
involving $\vpi$ using the multiplicities representation. For instance,
in terms of the
multiplicities the evidence takes the compressed form,
\begin{align*}
    &p(\vx|d,\ConParInfinite) = p(m_1, \ldots, m_\nmax|d,\ConParInfinite) 
    \\
    &\quad= 
	\frac{
	    \Gamma(1 + \ConParInfinite)
	    \prod_{l=1}^{K-1} (\ConParInfinite + ld)
	}{
	    \Gamma(\ConParInfinite + N)
	}
	\prod_{i=1}^M
	    \left(
		\frac{\Gamma(i - d)}
		    {i! \Gamma(1 - d)}
	    \right)^{m_i}
	    \frac{\nmax!}{m_i!}.
\end{align*}

\subsubsection{Heuristic for Integral Computation}
In principle the PYM integral over $\ConParInfinite$ is supported on the range
$[0,\infty)$. In practice, however, the posterior is concentrated on a
relatively small region of parameter space. It is generally unnecessary to consider the full integral over a semi-infinite domain. Instead, we select a subregion of $[0,1]\times[0,\infty)$ which supports the posterior up to $\epsilon$ probability mass. 
The posterior is unimodal in each variable $\alpha$ and $d$ separately
(see Appendix~\ref{sec:appendix:unimodal}); however, we do not have a
proof for the unimodality of the evidence.  Nevertheless, if there are
multiple modes, they must lie on a strictly decreasing line of $d$ as
a function of $\alpha$ and, in practice, we find the posterior to be unimodal. We illustrate the concentration of the evidence visually in
figure~\ref{fig:posterior:evidence:convergence}.

We compute the hessian at the MAP parameter value,
$(d_\MAP, \ConParInfinite_\MAP)$.  Using the inverse hessian as the
covariance of a Gaussian approximation to the posterior, we
select the grid which spans $\pm 6 ~\mathrm{std}$. 
We use numerical integration (Gauss-Legendre quadrature) on this region to compute the integral. When the hessian is rank-deficient (which may occur, for instance,  when the $\ConParInfinite_\MAP = 0$ or $d_\MAP = 0$), we use Gauss-Legendre quadrature to perform the integral in $d$ over $[0,1)$, but employ a Fourier-Chebyshev numerical quadrature routine to integrate $\ConParInfinite$ over $[0,\infty)$ \cite{Boyd87}.

\subsection{Sampling the full posterior over $H$}
The closed-form expressions for the conditional moments derived in the
previous section allow us to compute PYM and its variance by
2-dimensional numerical integration. PYM's posterior mean and variance
provide essentially a Gaussian approximation to the
posterior, and corresponding credible regions. However, in some situations (see \figref{sampling}),
variance-based credible intervals are a poor approximation to the true
posterior credible intervals. In these cases we may wish to
examine the full posterior distribution over $H$. We
describe methods for exactly sampling the posterior and argue that the
posterior variance provides a good approximation to the true credible
interval in most situations.

\begin{figure}[th!bp!]
\centering
\includegraphics[width = .9\linewidth]{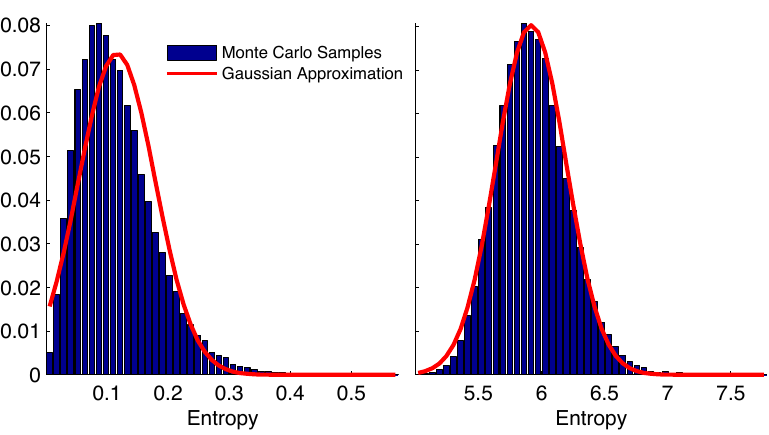}
\caption{The posterior distributions of entropy for two datasets of $100$ samples drawn from distinct distributions and the Gaussian approximation to each distribution based upon the posterior mean and variance. \textbf{(left):} Simulated example with low entropy. Notice that the true posterior is highly asymmetric, and that the Gaussian approximation does not respect the positivity of $H$.  \textbf{(right)} Simulated example with higher entropy. The Gaussian approximation is a much closer approximation to the true distribution.
}
\label{fig:sampling}
\end{figure}

 Stick-breaking, as described by \eqref{eq:PYstickbreak}, provides a straightforward algorithm for sampling
 distributions $\vpi\sim \PY(d,\ConParInfinite)$. With large enough $N_s$, stick-breaking samples $\left\{\tilde
   \pi_i\right\}_{i=1}^{N_s}$ approximate $\pi$ to arbitrary
 accuracy\footnote{Bounds on $N_s$, the number of sticks necessary to reach $\epsilon$ on average are provided in \cite{Ishwaran01}.}. Even so, sampling $\pi\sim
 \PY(d,\ConParInfinite)$ for $d$ near $1$, where $\vpi$ is likely to be
 heavy-tailed, may require intractably large $N_s$ to obtain a good approximation.   
 We address this problem by directly estimating the entropy of the tail, $\PY(d,\ConParInfinite + N_s d)$, using \eqref{eq:PY:prior:entropy:mean}. As shown in \figref{Hpriorvsad}, the prior variance of \PY becomes arbitrarily small as for large $\ConParInfinite$. For sampling, $N_s$ need only be large enough to make the variance of the tail entropy small. The resulting sample is the entropy of the (finite) samples plus the expected entropy of the tail, $H(\vpisamp) + \E[H|d,\ConParInfinite + Kd]$.\footnote{Due to the generality of the expectation formula \eqref{eq:sizebiasedexp}, this method may be applied to sample the distributions of other additive functionals of $\PY$.}

Sampling entropy is most useful for very small amounts of data drawn from distributions with low expected entropy. In \figref{posterior:evidence:convergence} we illustrate the posterior distributions of entropy in two simulated experiments. In general, as the expected entropy and sample size increase, the posterior becomes more approximately Gaussian.

\section{Theoretical properties of PYM}
Having defined PYM and discussed its practical computation, we now establish  conditions under which~\eqref{eq:HhatPYM} is defined (that is, the right--hand of the equation is finite), and also
prove some basic facts about its asymptotic properties. While $\Hpym$ is a Bayesian estimator, we wish to build connection to the literature by showing frequentist properties.

Note that the prior expectation $\E[H]$ does not exist for the improper prior defined above, since $p(h = \E[H]) \propto 1$ on $[0, \infty]$.
It is therefore reasonable to ask what conditions on the data are sufficient to obtain finite posterior expectation $\Hpym = E[H|\vx]$.
We give an answer to this question in the following short proposition (proofs of all statements may be found in the appendix),
\begin{theorem} \label{finiteposterior}
  Given a fixed dataset $\vx$ of $N$ samples, $\Hpym <\infty$ for
  any prior distribution $p(d,\ConParInfinite)$ if $N-K\geq 2$.
\end{theorem}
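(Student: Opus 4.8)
Write the estimator as in \eqref{eq:HhatPYM}, namely
$\Hpym = p(\vx)^{-1}\int \E[H|\vx,d,\alpha]\,p(\vx|d,\alpha)\,p(d,\alpha)\,\dm{(d,\alpha)}$,
with the integral over $(d,\alpha)\in[0,1)\times(0,\infty)$.
Since $H(\vpi)\ge 0$, the conditional mean $\E[H|\vx,d,\alpha]$ of \eqref{eq:posterior:mean} is nonnegative and finite for every $d<1$, and $p(\vx)$ is positive because the evidence \eqref{eq:evidence} is everywhere positive on the domain; $p(\vx)$ is also finite, by the evidence bound derived below (for a proper prior trivially, and for the improper flat-in-$h$ PYM prior because its density decays like $1/\alpha$ as $\alpha\to\infty$ and is cut off near $d\to1$ by $q(\gamma)$). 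It therefore suffices to dominate the numerator integrand $\E[H|\vx,d,\alpha]\,p(\vx|d,\alpha)$ by a single function of $(d,\alpha)$ that is simultaneously bounded on $[0,1)\times[0,\infty)$ and Lebesgue-integrable there: for a proper prior one then pulls the bound out of the integral, and for the improper PYM prior the extra decay in $\alpha$ of its density makes the dominated integrand still integrable.

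The first ingredient is an upper bound on the conditional mean. In \eqref{eq:posterior:mean}, use $\tfrac{\alpha+Kd}{\alpha+N}\le 1$, the recursion identity $-\psi_0(1-d)=\tfrac1{1-d}-\psi_0(2-d)$ with $\psi_0(2-d)$ bounded on $d\in[0,1)$, the bound $\psi_0(\alpha+N+1)\le\log(\alpha+N+1)$, and the fact that each term $(n_i-d)\psi_0(n_i-d+1)$ is bounded uniformly in $d\in[0,1)$ (singletons contribute $(1-d)\psi_0(2-d)\to0$). This yields, with constants $C_1,C_2$ depending only on $\vx$,
\[
0\;\le\;\E[H|\vx,d,\alpha]\;\le\; C_1\log(\alpha+2)\;+\;\frac{1}{1-d}\;+\;C_2 .
\]

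The second ingredient is a matching upper bound on the evidence. Rewrite \eqref{eq:evidence} as $p(\vx|d,\alpha)=\Big(\prod_{l=1}^{K-1}(\alpha+ld)\Big)\frac{\Gamma(1+\alpha)}{\Gamma(\alpha+N)}\prod_{i=1}^{K}\frac{\Gamma(n_i-d)}{\Gamma(1-d)}$, bound $\prod_{l}(\alpha+ld)\le\prod_l(\alpha+l)=\Gamma(\alpha+K)/\Gamma(\alpha+1)$, and expand each Pochhammer factor $\frac{\Gamma(n_i-d)}{\Gamma(1-d)}=\prod_{j=1}^{n_i-1}(j-d)$: this is $1$ for a singleton, and for a non-singleton it contains the factor $(1-d)$ and is otherwise at most $(n_i-1)!$. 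Writing $r=|\{i:n_i\ge2\}|$, we get
\begin{equation*}
p(\vx|d,\alpha)\;\le\; C_{\vx}\,(1-d)^{r}\,\frac{\Gamma(\alpha+K)}{\Gamma(\alpha+N)}
\;=\;\frac{C_{\vx}\,(1-d)^{r}}{\prod_{m=K}^{N-1}(\alpha+m)}
\;\le\;\frac{C_{\vx}\,(1-d)^{r}}{(\alpha+K)^{\,N-K}} .
\end{equation*}
Now observe that $N-K=\sum_i(n_i-1)\ge r$, so $N-K\ge2$ forces $r\ge1$. Multiplying the two displays, the cross term $\tfrac1{1-d}(1-d)^r=(1-d)^{r-1}\le1$ since $r\ge1$, the leading term is $\log(\alpha+2)(\alpha+K)^{-(N-K)}\le\log(\alpha+2)(\alpha+K)^{-2}$, and the integrand is dominated by $C_{\vx}'\big(1+\log(\alpha+2)\big)(\alpha+K)^{-2}$, which is bounded on all of $[0,1)\times[0,\infty)$ and integrable against Lebesgue measure. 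This closes the argument.

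The step I expect to be the main obstacle is controlling the edge $d\to1$: there the conditional mean blows up through $-\psi_0(1-d)\sim\tfrac1{1-d}$ and the evidence contains the diverging normalizer $\Gamma(1-d)^{-K}$, and one must see that the product is tamed by \emph{exactly} $r$ powers of $(1-d)$ coming from the singleton/non-singleton structure of the counts — so that $N>K$ (equivalently $r\ge1$) is what rescues the $d$-direction, while the extra unit of margin in $N-K\ge2$ is precisely what is needed to make the $\alpha\to\infty$ tail $(\alpha+K)^{-(N-K)}$ overcome the logarithmic growth of $\E[H|\vx,d,\alpha]$ uniformly over \emph{all} priors, including improper ones flat in $\alpha$. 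The remaining checks — behaviour as $\alpha\to0$ and $d\to0$, continuity on the open region, the two elementary uniform bounds on the bracketed polygamma terms, and the finiteness of $p(\vx)$ — are routine.
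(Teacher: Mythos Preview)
Your proof is correct and follows essentially the same route as the paper's: both bound $\E[H|\vx,d,\alpha]$ above by a term of order $\log\alpha$ plus a $\tfrac{1}{1-d}$ singularity (via the digamma recursion $\psi_0(1-d)=\psi_0(2-d)-\tfrac{1}{1-d}$), bound the evidence via $\prod_l(\alpha+ld)\le\Gamma(\alpha+K)/\Gamma(\alpha+1)$ and the Pochhammer expansion $\Gamma(n_i-d)/\Gamma(1-d)=\prod_{j=1}^{n_i-1}(j-d)$, and then observe that the $(1-d)$ factors coming from non-singleton bins kill the $d\to1$ blow-up while the $(\alpha+K)^{-(N-K)}$ decay handles the $\alpha\to\infty$ tail against the logarithm provided $N-K\ge2$. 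Your introduction of $r=|\{i:n_i\ge2\}|$ and the explicit inequality $N-K\ge r$ make the $d$-direction bookkeeping slightly cleaner than the paper's treatment, but the substance is identical.
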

In other words, we require $2$ coincidences in the data for $\Hpym$ to be finite.
When no coincidences have occurred in $\vx$, we have no evidence
regarding the support of the $\vpi$, and our resulting entropy
estimate is unbounded. In fact, in the absence of coincidences, no
entropy estimator can give a reasonable estimate without prior
knowledge or assumptions about $\alphabetSize$. 

Concerns about inadequate numbers of coincidences are peculiar to the
undersampled regime; as $N\to \infty$, we will almost surely
observe each letter infinitely often. We now turn to asymptotic
considerations, establishing consistency of $\Hpym$ in the limit of
large $N$ for a broad class of distributions. It is known that the plugin is consistent for any distribution (finite or countably infinite), although the rate of convergence can be arbitrarily slow~\cite{Antos01}.
Therefore, we establish consistency by showing asymptotic convergence to the plugin estimator.

For clarity, we explicitly denote a quantity's dependence upon sample size $N$ by introducing a subscript. Thus, $\vx$ and $K$ become $\vx_N$ and $K_N$, respectively.
As a first step, we show that $\E[H|\vx_N,d,\ConParInfinite]$ converges to the plugin estimator.
\begin{theorem}\label{fixedHconv}
Assuming $\vx_N$ drawn from a fixed, finite or countably infinite
discrete distribution $\vpi$ such that $\frac{K_N}{N} \xrightarrow{P} 0$,
\begin{equation*}
      \left| 
      \E[H  |\vx_N, d, \ConParInfinite] - \E[H_\mathrm{plugin}| \vx_N] \right| \xrightarrow{P} 0
\end{equation*}
\end{theorem}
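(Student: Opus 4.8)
The plan is to compare the two estimators through their closed forms and show the difference vanishes piece by piece. With $\tilde N := \alpha + N$, rewrite the posterior mean \eqref{eq:posterior:mean} as
\[
\E[H|\vx_N,d,\alpha] = \psi_0(\tilde N + 1) \;-\; \frac{\alpha + K_N d}{\tilde N}\,\psi_0(1-d) \;-\; \frac{1}{\tilde N}\sum_{i=1}^{K_N}(n_i - d)\,\psi_0(n_i - d + 1),
\]
and recall $\Hplug = \log N - \tfrac1N\sum_{i=1}^{K_N} n_i \log n_i$. Subtracting, I would split the difference into (i) $\psi_0(\tilde N+1) - \log N$; (ii) $-\frac{\alpha + K_N d}{\tilde N}\psi_0(1-d)$; and (iii) the discrepancy between the weighted sums $\frac{1}{\tilde N}\sum(n_i-d)\psi_0(n_i-d+1)$ and $\frac1N\sum n_i\log n_i$. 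Terms (i) and (ii) are handled with two elementary facts valid for fixed $d\in[0,1)$ and fixed $\alpha>-d$: the asymptotic $\psi_0(x) = \log x + O(1/x)$, which makes (i) deterministically $O(1/N)$; and boundedness of the constant $\psi_0(1-d)$, which bounds the term in (ii) in absolute value by a constant times $\frac{|\alpha| + K_N}{N}$, hence $\xrightarrow{P}0$ under the hypothesis $K_N/N\xrightarrow{P}0$.

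The work is in (iii). First I would swap $\tilde N^{-1}$ for $N^{-1}$ at deterministic cost $O(\log N / N)$, using the crude bound $\sum(n_i-d)\,|\psi_0(n_i-d+1)| \le C_d\,N\log(N+1)$ (which follows from $\psi_0(x)<\log x$ for $x>1$, boundedness of $\psi_0$ near its small positive arguments, and $n_i \le N$). Then I compare the two sums term by term: writing $r(n) := \psi_0(n-d+1) - \log n$, one has $(n-d)\psi_0(n-d+1) - n\log n = -d\log n + (n-d)\,r(n)$, and since $|r(n)| \le C_d/n$ for all $n\ge 1$ with $C_d$ depending only on $d$, each summand is at most $\log n_i + C_d$ in absolute value. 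Thus (iii) is bounded by $\frac1N\sum_{i=1}^{K_N}\bigl(\log n_i + C_d\bigr)$ up to the $O(\log N/N)$ correction.

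The one genuinely delicate point is controlling $\frac1N\sum_{i=1}^{K_N}\log n_i$: the naive bound $\log n_i\le\log N$ yields only $\frac{K_N\log N}{N}$, which need not vanish under $K_N/N\xrightarrow{P}0$ alone. The remedy is concavity of $\log$: since $\sum n_i = N$ over $K_N$ positive summands, $\sum\log n_i \le K_N\log(N/K_N)$, so $\frac1N\sum\log n_i \le \phi(K_N/N)$ where $\phi(x) := x\log(1/x)$, extended by $\phi(0)=0$, is continuous and bounded on $[0,1]$. By the continuous mapping theorem $\phi(K_N/N)\xrightarrow{P}\phi(0)=0$, and $C_d K_N/N\xrightarrow{P}0$ trivially, so (iii) $\xrightarrow{P}0$. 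Assembling the four contributions gives the claim. I expect this last step — recognizing that $K_N/N\to0$ is precisely enough, via the $x\log(1/x)$ bound, rather than the stronger $K_N = o(N/\log N)$ one might first reach for — to be the crux; everything else is routine estimation with digamma asymptotics.
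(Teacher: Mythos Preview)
Your argument is correct, and it is in fact more careful than the paper's own proof. The paper proceeds by writing out the posterior mean, making an unjustified algebraic simplification (replacing $(n_i-d)/(\alpha+N)$ by $n_i/N$ in one step), and then asserting that since for each fixed symbol $k$ the count $n_k \to \infty$, the digamma asymptotic $\psi_0(x)\sim\log x$ may be applied termwise to conclude. That argument is heuristic: the number of summands $K_N$ grows, and at every finite $N$ many of the $n_i$ are small (singletons, doubletons), so termwise application of the large-argument expansion does not by itself control the sum. You sidestep this entirely: instead of arguing that individual counts diverge, you bound the per-term discrepancy $|(n_i-d)\psi_0(n_i-d+1)-n_i\log n_i|$ uniformly by $\log n_i + C_d$ for all $n_i\ge 1$, and then invoke concavity of $\log$ to get $\tfrac1N\sum\log n_i\le \phi(K_N/N)$ with $\phi(x)=x\log(1/x)$. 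This is the genuinely new ingredient relative to the paper, and it is exactly what makes the hypothesis $K_N/N\xrightarrow{P}0$ sufficient rather than something stronger like $K_N\log N/N\to 0$. What you lose is brevity; what you gain is a proof that actually closes.
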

The assumption $K_N/N\to 0$ is more general than it may seem. For any
infinite discrete distribution, it holds that  $K_N \to \E[K_N]$ a.s., and
$\E[K_N]/N \to 0$ a.s.~\cite{Gnedin07}, and so $K_N/N \to 0$ in
probability for an arbitrary distribution. As a result, the right--hand--side of \eqref{eq:posterior:mean} shares its asymptotic
behavior with $\Hplug$, in particular consistency. As
\eqref{eq:posterior:mean} is consistent for each value of $\ConParInfinite$ and
$d$, it is intuitively plausible that $\Hpym$, as a mixture of such
values, should be consistent as well. However, while \eqref{eq:posterior:mean} alone is well-behaved, it is not clear that $\Hpym$ should be. 
Since $\E[H|\vx,d,\ConParInfinite] \to \infty$ as $\ConParInfinite \to \infty$, care must be taken when integrating over $p(d,\ConParInfinite|\vx)$.
Our main consistency result is, 
\begin{theorem}\label{consistencyproof}
For any proper prior or bounded improper prior $p(d, \ConParInfinite)$, if data $\vx_N$ are drawn from a fixed, countably infinite
discrete distribution $\vpi$ such that for some constant $C>0$, 
$K_N = o(N^{1-1/C})$ in probability, then
  \begin{equation*}  |\E[H | \vx_N] - \E[H_\mathrm{plugin}| \vx_N]| \xrightarrow{P} 0
\end{equation*}
\end{theorem}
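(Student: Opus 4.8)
The plan is to prove directly that the average
\[
\Hpym-\Hplug=\int_{[0,1)\times(0,\infty)}\bigl(\E[H\mid\vx_N,d,\alpha]-\Hplug\bigr)\,p(d,\alpha\mid\vx_N)\,\dm{(d,\alpha)}
\]
converges to $0$ in probability, by splitting the parameter domain at a cutoff into a ``bulk'' $S_N=[0,1-\delta]\times(0,M_N]$ (with $\delta>0$ a small fixed constant and $M_N\to\infty$ chosen slowly) and its complement $S_N^{c}$. On $S_N$ we invoke a uniform-in-$(d,\alpha)$ strengthening of Theorem~\ref{fixedHconv}; on $S_N^{c}$ we use the closed form of the evidence~\eqref{eq:evidence} together with the boundedness of the prior to show the contribution is negligible. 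It therefore suffices to establish: \textbf{(a)} $\sup_{(d,\alpha)\in S_N}\bigl|\E[H\mid\vx_N,d,\alpha]-\Hplug\bigr|\xrightarrow{P}0$; and \textbf{(b)} $\int_{S_N^{c}}\bigl|\E[H\mid\vx_N,d,\alpha]-\Hplug\bigr|\,p(d,\alpha\mid\vx_N)\,\dm{(d,\alpha)}\xrightarrow{P}0$. The cutoff $M_N$ must be large enough to contain the bulk of the posterior over $\alpha$ --- which sits at a scale $\alpha^{\ast}_N=o(N)$ (of order $K_N/\log(N/K_N)$ in the undersampled regime) --- but small enough ($M_N=o(N/\log N)$ suffices) that the posterior-mean formula stays close to the plugin there; the hypothesis $K_N=o(N^{1-1/C})$ is exactly what guarantees that such an $M_N$ exists (for instance $M_N=\sqrt{\alpha^{\ast}_N N}$).

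For \textbf{(a)}: the posterior mean~\eqref{eq:posterior:mean} is smooth in $(d,\alpha)$, and its gap with $\Hplug=\log N-\tfrac1N\sum_i n_i\log n_i$ decomposes into $\digam(\alpha+N+1)-\log N=O(M_N/N)$; a term $\tfrac{\alpha+K_Nd}{\alpha+N}\,\lvert\digam(1-d)\rvert\le\tfrac{M_N+K_N}{N}\,c_\delta$; and the difference between $\tfrac1{\alpha+N}\sum_i(n_i-d)\digam(n_i-d+1)$ and $\tfrac1N\sum_i n_i\log n_i$, which is $O\bigl((M_N+K_N)\log N/N\bigr)$ using $\digam(n+1)=\log n+O(1/n)$ for $n\ge1$ and the boundedness of $\digam(2-d)$ for the singleton terms. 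Each piece is $o_P(1)$ uniformly over $S_N$ once $K_N\log N/N\xrightarrow{P}0$ (which follows from the hypothesis) and $M_N\log N/N\to0$; this is nothing but the proof of Theorem~\ref{fixedHconv} carried out with bounds that are monotone in $(d,\alpha)$.

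For \textbf{(b)}: bound the integrand crudely by $\E[H\mid\vx_N,d,\alpha]+\log N\le C\bigl(\log(\alpha+N)+\lvert\digam(1-d)\rvert\bigr)$, using $\Hplug\le\log K_N$ and $\E[H\mid\vx_N,d,\alpha]\le\digam(\alpha+N+1)+\lvert\digam(1-d)\rvert+1$ read off from~\eqref{eq:posterior:mean}. Since the prior is bounded, $p(d,\alpha\mid\vx_N)\le\lVert p\rVert_\infty\,p(\vx_N\mid d,\alpha)/p(\vx_N)$, and from~\eqref{eq:evidence}, using $\prod_{l=1}^{K-1}(\alpha+ld)\le\Gamma(\alpha+K)/\Gamma(\alpha+1)$ (valid since $d<1$), we obtain the clean estimate
\[
p(\vx_N\mid d,\alpha)\ \le\ C_d\,\frac{\Gamma(\alpha+K)}{\Gamma(\alpha+N)},\qquad C_d:=\frac{\prod_i\Gamma(n_i-d)}{\Gamma(1-d)^{K}}\ \le\ (1-d)^{K_2}\,C^{\ast},
\]
where $C^{\ast}:=\prod_i(n_i-1)!$ and $K_2\ge1$ is the number of symbols observed at least twice (nonzero once $N-K\ge2$); the factor $(1-d)^{K_2}$ tames $\lvert\digam(1-d)\rvert$ as $d\to1$, so the $\{d>1-\delta\}$ part of the tail integral is $O(\delta\,C^{\ast})$ with $\delta$ fixed, while the $\{\alpha>M_N\}$ part is of order $C^{\ast}\log N\cdot\Gamma(M_N+K)/\Gamma(M_N+N)$. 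Finally we lower-bound $p(\vx_N)$ by restricting the defining integral to a small neighborhood of a near-maximizer of the evidence, whose parameters lie well inside $S_N$ and at which the prior is bounded below and the evidence sits within a constant factor of its peak; since $\prod_i\Gamma(n_i)=C^{\ast}$, this factor cancels in the ratio, and the remaining quotient of gamma ratios tends to $0$ because $M_N\to\infty$ while $M_N=o(N)$ and $M_N\gg\alpha^{\ast}_N$.

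The crux is step \textbf{(b)}: one must control the posterior mass on large $\alpha$ tightly enough to kill the logarithmic growth of $\E[H\mid\vx_N,d,\alpha]$ there, \emph{uniformly in $N$}, despite the fact that both the tail integral and the marginal likelihood $p(\vx_N)$ themselves vanish as $N\to\infty$, so the estimate rests on comparing their (exponentially small) magnitudes via Stirling/Laplace estimates of ratios of gamma functions. This is exactly where $K_N=o(N^{1-1/C})$ is used: it keeps the evidence peak $\alpha^{\ast}_N$ polynomially below $N$ while forcing $N-K_N$ to be comparable to $N$ --- so the evidence decays like $\alpha^{-(N-K_N)}$ for large $\alpha$ --- and the two facts together make the ratio of tail mass to bulk mass vanishingly small for a suitable $M_N=o(N)$; the exponent $C$ calibrates how the room between $\alpha^{\ast}_N$, $M_N$ and $N$ must be apportioned. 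By contrast the region $d\to1$ is soft, and a fixed small $\delta$ suffices, because there the evidence vanishes at a rate bounded below independently of $N$.
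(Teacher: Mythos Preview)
Your skeleton---split the $(d,\alpha)$ domain, show the bulk converges to $\Hplug$ uniformly, show the tail contributes nothing---is exactly the paper's strategy, and your uniform-convergence treatment of the bulk (your part \textbf{(a)}) is if anything cleaner than the paper's, which bounds the head from above by monotonicity in $\alpha$ (evaluating at the boundary $\alpha_0$) and from below by a Jensen trick applied to $e^{-H}$. The paper takes the cutoff at $\alpha_0=K_N+c$ with $c=(C+1)/2$ and does \emph{not} split in $d$; instead it integrates $d$ out first in both the tail numerator and in $p(\vx_N)$, so that the common factor $\int_0^1\prod_i\Gamma(n_i-d)/\Gamma(1-d)^K\,\dm d$ cancels, and the remaining comparison is a one-variable gamma-ratio estimate handled by explicit Beta integrals and Stirling.

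The genuine gap in your proposal is the treatment of the $\{d>1-\delta\}$ strip in part \textbf{(b)}. The claim that a \emph{fixed} $\delta$ suffices is not justified by the argument you give, and the heuristic ``the evidence vanishes at a rate bounded below independently of $N$'' is false: the evidence near $d=1$ is of order $(1-d)^{K_2}$ with $K_2$ growing in $N$, while the evidence at the peak (say $d=0$, $\alpha\approx\alpha^\ast_N$) also shrinks, and it is the \emph{ratio} that matters. If you carry your bounds through, the $d$-strip contributes (after dividing by $p(\vx_N)$) a quantity of order $\delta^{K_2}$ times a combinatorial factor comparable to $\binom{N-1}{K_N}$, which does not vanish for fixed $\delta$. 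You can repair this either by letting $\delta=\delta_N\to0$ (and then tracking $c_{\delta_N}$ through part \textbf{(a)}), or---much more simply---by not splitting in $d$ at all: bound $\prod_{l=1}^{K-1}(\alpha+ld)\le\Gamma(\alpha+K)/\Gamma(\alpha+1)$ in the tail and $\ge\alpha^{K-1}$ in the lower bound for $p(\vx_N)$, so that the $d$-integral of $C_d$ factors out and cancels. Your Laplace-type lower bound on $p(\vx_N)$ (``restrict to a neighborhood of a near-maximizer'') is also too vague for this purpose; the paper's explicit inequality $\int_0^\infty\alpha^{K-1}\Gamma(1+\alpha)/\Gamma(\alpha+N)\,\dm\alpha\ge\Gamma(K)\,B(N-K-1,K+1)/\Gamma(N-1)$, obtained from $\log(1/t)^{-1}\ge t/(1-t)$, is what makes the final Stirling comparison go through cleanly.
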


Intuitively, the asymptotic behavior of $K_N/N$ is tightly related to the tail behavior of the distribution~\cite{Gnedin07}.
In particular, $K_N \sim c N^b$ with $0 < b < 1$ if and only if $\pi_i \sim c' i^{\frac{1}{b}}$ where $c$ and $c'$ are constants, and we assume $\pi_i$ is non-increasing~\cite{Gnedin07}.
The class of distributions such that $K_N = o(N^{1-1/C})$ a.s. includes the class of power-law or thinner tailed distributions, i.e., 
$\pi_i = O(i^b)$ for some $b > 1$ (again $\pi_i$ is assumed non-increasing).

We conclude this section with some remarks on the role of the prior in
Theorem~\ref{consistencyproof} as well as the significance of asymptotic
results in general. While consistency is an important property for any
estimator, we emphasize that PYM is designed to address the
undersampled regime. Indeed, since $\Hplug$ is consistent and has an optimal
rate of convergence for a large class of distributions \cite{Vu07,Antos01, Zhang2012},
asymptotic properties provide little reason to use
$\Hpym$. Nevertheless, notice that Theorem~\ref{consistencyproof} makes
very weak assumptions about $p(d,\ConParInfinite)$. In particular, the result
is not dependent upon the form of the PYM prior introduced in the
previous section: it holds for any probability distribution $p(d,
\ConParInfinite)$, or even a bounded improper prior. Thus, we can view
Theorem~\ref{consistencyproof} as a statement about a class of PYM
estimators. Almost any prior we choose on $(d, \ConParInfinite)$ results in a
consistent estimator of entropy.

\section{Simulation Results}
We compare $\Hpym$ to other proposed entropy estimators using several example datasets.
Each plot in Figs~\ref{fig:convergence:sb}, \ref{fig:convergence:powerlaw}, \ref{fig:convergence:real}, and \ref{fig:convergence:finite} shows convergence as well as small sample performance.
We compare our estimators, DPM ($d = 0$ only) and PYM ($\Hpym$), with other enumerable-support estimators:  coverage-adjusted estimator (CAE)~\cite{Chao03,Vu07}, asymptotic NSB (ANSB, section~\ref{sec:ANSB})~\cite{Nemenman11}, Grassberger's asymptotic bias correction (GR08)~\cite{Grassberger08}, and Good-Turing estimator~\cite{Zhang2012}.
Note that similar to ANSB, DPM is an asymptotic (Poisson-Dirichlet)
limit of NSB, and hence in practice behaves identically to NSB with
large but finite $K$.
We also compare with plugin~\eqref{eq:Hplugin} and a standard
Miller-Maddow (MiMa) bias correction method with a conservative
assumption that the number of uniquely observed symbols is $K$~\cite{Miller55}.
To make comparisons more straightforward, we do not apply additional bias
correction methods (e.g. jackknife) to any of the estimators.

\begin{figure}[th!bp!]
\centering
\includegraphics{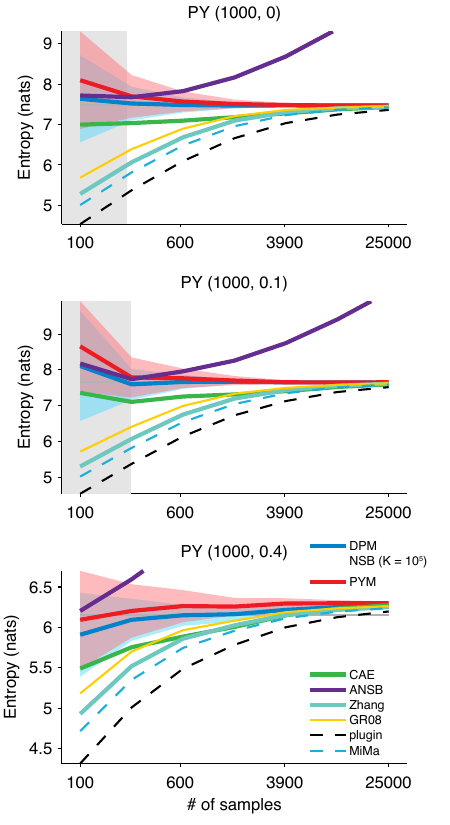}
\caption{
Comparison of estimators on stick-breaking distributions.
Poisson-Dirichlet distribution with 
$(d=0, \ConParInfinite = 1000)$ (top),
$(d = 0.1, \ConParInfinite = 1000)$ (middle),
$(d= 0.4, \ConParInfinite = 100)$ (bottom).  
Recall that the Dirichlet Process is the Pitman-Yor Process with $d=0$.
We compare our estimators (DPM, PYM) with other enumerable support estimators (CAE, ANSB, Zhang, GR08), and finite support estimators (plugin, MiMa).
Note that in these examples, the DPM estimator performs indistinguishably from NSB with alphabet size $\alphabetSize$ is fixed to a large value ($\alphabetSize=10^5$). For the experiments, we first sample a single $\vpi \sim \PY(d,\ConParInfinite)$ using the stick-breaking procedure of \eqref{eq:PYstickbreak}. 
For each $N$ ($x$-axis), we apply all estimators to each of 10 sample datasets drawn randomly from $\vpi$. 
Solid lines are averages over all 10 realizations.
Colored shaded area represent $95\%$ credible intervals averaged over all 10 realizations.
Gray shaded area represents the ANSB approximation regime defined as
expected number of unique symbols being more than 90\% the total number of samples.
}
\label{fig:convergence:sb}
\end{figure}
In each simulation, we draw 10 sample distributions $\vpi$. From each $\vpi$ we draw a dataset of N iid samples. In each figure we show the performance of all estimators averaged across the 10 sampled datasets.

The experiments of \figref{convergence:sb} show performance on a single $\vpi \sim \PY(d,\ConParInfinite)$ drawn using the stick-breaking procedure of \eqref{eq:PYstickbreak}. We draw $\pi_i$ according to \eqref{eq:PYstickbreak} in blocks of size $10^3$ until $1-\sum_{N_s} \pi_i < 10^{-3}$, where $N_s$ is the number of sticks. Unsurprisingly, PYM performs well when the data are truly generated by a Pitman-Yor process (\figref{convergence:sb}). 
Credible intervals for DPM tend to be smaller than PYM, although both shrink quickly (indicating high confidence).
When the tail of the distribution is exponentially decaying, ($d=0$ case; \figref{convergence:sb} top), DPM shows slightly improved performance.
When the tail has a strong power-law decay, (\figref{convergence:sb} bottom), PYM performs better than DPM. Most of the other estimators are consistently biased down, with the exception of ANSB.

The shaded gray area indicates the ANSB approximation regime, where the approximation used to define the ANSB estimator is approximately valid.
Although this region has no definitive boundary, it corresponds to a regime where where the average number of coincidences is small relative to the number of samples.
Following~\cite{Nemenman11}, we define the undersampled regime to be the region where $E[K_N] / N > 0.9$, where $K_N$ is the number of unique symbols appearing in a sample of size $N$.
Note that only 3 out of 10 results in Figs.~\ref{fig:convergence:sb},\ref{fig:convergence:powerlaw},\ref{fig:convergence:real},\ref{fig:convergence:finite} have shaded area; the ANSB approximation regime is not large enough to appear in the plots.
This regime appears to be designed for a relatively broad distribution (close to uniform distribution).
In cases where a single symbol has high probability, the ANSB approximation regime is essentially never valid.
In our example distributions, this is the case with for power-law distributions and $\mathcal{PY}$ distributions with large $d$.
For example, \figref{convergence:powerlaw} is already outside of the ANSB approximation regime with only 4 samples.

\begin{figure}[th!bp!]
\centering
\includegraphics{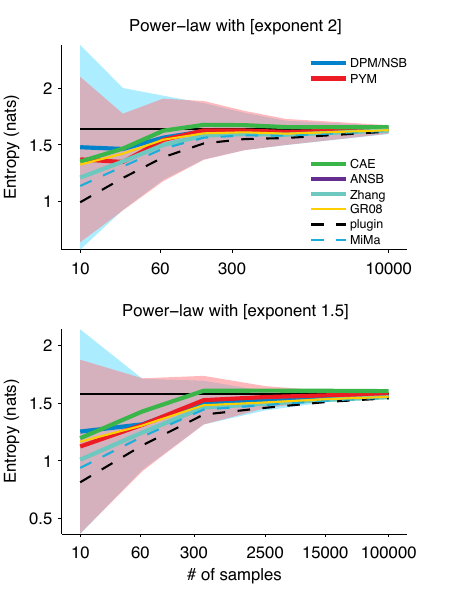}
\caption{
Comparison of estimators on power-law distributions.
The highest probabilities in those power-law distributions were big enough
such that they were never effectively undersampled.
}
\label{fig:convergence:powerlaw}
\end{figure}
Although Pitman-Yor process $\PY(d, \ConParInfinite)$ has a power-law tail controlled by $d$, the high probability portion is modulated by $\ConParInfinite$, and does not strictly follow a power-law distribution as a whole.
In \figref{convergence:powerlaw}, we evaluate the performance for $p_i \propto i^{-2}$ and $p_i \propto i^{-1.5}$.
PYM and DPM has slight negative bias, but the credible interval covers the true entropy for all sample sizes.
For small sample sizes, most estimators are negatively biased, again except for ANSB (which does not show up in the plot since it is severely biased upwards).
Notably CAE performs very well in moderate sample sizes.

\begin{figure}[thbp!]
\centering
\includegraphics{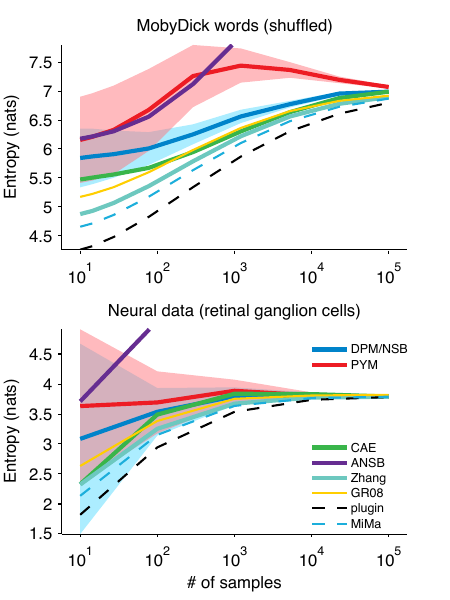}
\caption{
Comparison of estimators on real data sets.
}
\label{fig:convergence:real}
\end{figure}
In \figref{convergence:real}, we compute the entropy per word of in the novel \textit{Moby Dick} by Herman Melville, and entropy per time bin of a population of retinal ganglion cells from monkey retina~\cite{Pillow05}. We tokenized the novel into individual words using the Python library NLTK\footnote{\url{http://www.nltk.org/}}. Punctuation is disregarded. 
These real-world datasets have heavy, approximately power-law tails \footnote{We emphasize that we use the term ``power-law'' in a heuristic, descriptive sense only. We did not fit explicit power-law models to the datasets in questions, and neither do we rely upon the properties of power-law distributions in our analyses.} as pointed out earlier in \figref{powerlaw}. For Moby Dick, PYM slightly overestimates, while DPM slightly underestimates, yet both methods are closer to the entropy estimated by the full data available than other estimators. DPM is overly confident (its credible interval is too narrow), while PYM becomes overly confident with more data.
The neural data were preprocessed to be a binarized response (10 ms time bins) of 8 simultaneously recorded off-response retinal ganglion cells.
PYM, DPM, and CAE all perform well on this dataset, with both PYM and DPM bracketing the asymptotic value with their credible intervals.

Finally, we applied the denumerable support estimators to finite support distributions (\figref{convergence:finite}).
The power-law $p_n \propto n^{-1}$ has the heaviest tail among the simulations we consider, but notice that it does not define a proper distribution (the probability mass does not integrate), and so we use a truncated $1/n$ distribution with the first $1000$ symbols (\figref{convergence:finite} top).
Initially PYM shows the least bias, but DPM provides a better estimate for increasing sample size. Notice, however, that for both estimates the credible intervals consistently cover the true entropy. Interestingly, the finite support estimators perform poorly compared to DPM, CAE and PYM.
For the uniform distribution over $1000$ symbols, both DPM and PYM have slight upward bias, while CAE shows almost perfect performance (\figref{convergence:finite} middle).
For Poisson distribution, a theoretically enumerable-support distribution on the natural numbers, the tail decays so quickly that the effective support (due to machine precision) is very small ($26$ in this case). All the estimators, with the exception of ANSB, work quite well. 
The novel Moby Dick provides the most challenging data: no estimator seems to have converged, even with the full data.
Surprisingly, the Good-Turing estimator~\cite{Zhang2012} tends to perform similarly to the Grassberger and Miller-Maddow bias-correction methods.
Among such the bias-correction methods, Grassberger's method tended to show the best performance, outperforming Zhang's method.

The computation time for our estimators is $O(LG)$, where $L$ number
symbols with distinct frequencies (bounded above by the quantity $M$ defined in Section \ref{sec:multiplicities}) and $G$ is the number of gridpoints used to compute the numerical integral. Hence, computation time as a function of sample size depends upon how $L$ scales with samples size $N$, always sublinearly, and $O(N^{1/2})$ in the worse case. In our examples, computation times for $10^5$ samples are in the order of 0.1 seconds (\figref{computation-time}). Hence in practice, for the examples we have shown, more time is spent building the histogram from the data than computing the entropy estimate. 

\begin{figure}[thbp!]
\centering
\includegraphics{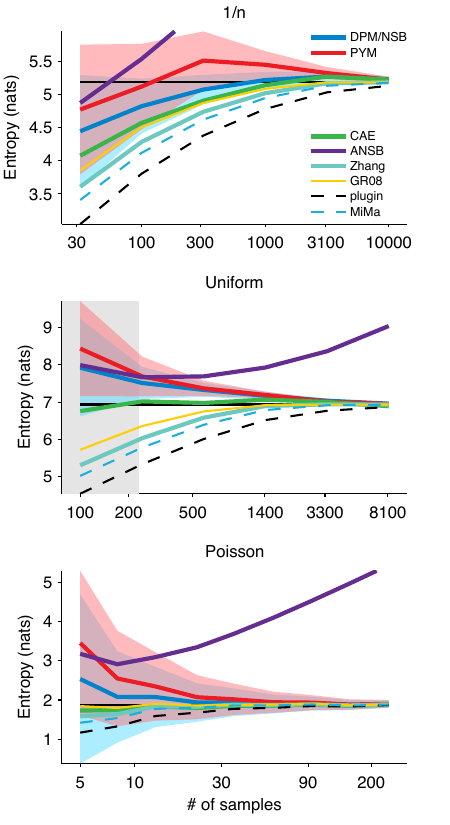}
\caption{
Comparison of estimators on finite support distributions.
Black solid line indicates the true entropy.
Poisson distribution ($\lambda = e$) has a countably infinite tail, but a very thin one---all probability mass was concentrated in 26 symbols within machine precision.
}
\label{fig:convergence:finite}
\end{figure}

\begin{figure}[thbp!]
\centering
\includegraphics{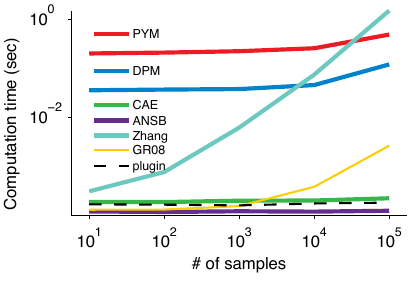}
\caption{
    Median computation time to estimate entropy for the neural data.
    The computation time excludes the preprocessing required to build the histogram and convert to multiplicity representation.
    Note that for DPM and PYM this time also includes estimating the posterior variance.
}
\label{fig:computation-time}
\end{figure}

\section{Conclusion}

In this paper we introduced PYM, a new entropy estimator for
distributions with unknown support. We derived analytic forms for the
conditional mean and variance of entropy under a \DP and \PY prior for
fixed parameters. Inspired by the work of \cite{Nemenman02}, we
defined a novel \PY mixture prior, PYM, which implies an approximately
flat prior on entropy. PYM addresses two major issues with NSB: its
dependence on knowledge of $\alphabetSize$ and its inability
(inherited from the Dirichlet distribution) to account for the
heavy-tailed distributions which abound in biological and other
natural data.  

Futher experiments on diverse datasets might reveal ways to improve PYM, such as new tactics or theory for selecting the prior on tail behavior, $q(\gamma)$. It may also prove fruitful to investigate ways to tailor PYM to a specific application, for instance by combining it with with more structured priors, such as those employed in \cite{Archer2013d}. Further, while we have shown that PYM is consistent for any prior, an expanded theory might investigate the convergence rate, perhaps in relation to the choice of prior.

We have shown that PYM performs well in comparison to other entropy estimators, and indicated its practicality in example applications to data. A MATLAB implementation of the PYM estimator is available at \url{https://github.com/pillowlab/PYMentropy}.

\appendix

\section{Derivations of Dirichlet and $\PY$ moments}
In this Appendix we present as propositions a number of technical
moment derivations used in the text.
\subsection{Mean entropy of finite Dirichlet}
\begin{proposition}[Replica trick for entropy~\cite{Wolpert95}]
  For $\vpi \sim \Dir(\alpha_1, \alpha_2, \dots,
  \alpha_{\alphabetSize})$, such that $\sum_{i=1}^{\alphabetSize}
  \alpha_i = A$, and letting $\valpha = \left( \alpha_1, \alpha_2,
    \dots, \alpha_{\alphabetSize}\right)$, we have
  \begin{eqnarray}
    \E[ H(\vpi) |\valpha] &=& \psi_0(A+1) - \sum_{i=1}^{\alphabetSize}
    \frac{\alpha_i}{A} \psi_0(\alpha_i+1)\label{eq:DirMeanEntropy}
  \end{eqnarray}
\end{proposition}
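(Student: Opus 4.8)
The plan is to use linearity of expectation to reduce the entropy mean to a sum of one-dimensional expectations, and then evaluate each term by a ``replica'' differentiation. Since $H(\vpi) = -\sum_{i=1}^{\alphabetSize}\pi_i\log\pi_i$, linearity gives $\E[H(\vpi)\,|\,\valpha] = -\sum_{i=1}^{\alphabetSize}\E[\pi_i\log\pi_i\,|\,\valpha]$, so it suffices to compute $\E[\pi_i\log\pi_i]$ for each $i$. By the aggregation property of the Dirichlet, the marginal law of a single coordinate $\pi_i$ of a $\Dir(\valpha)$ vector is $\Beta(\alpha_i,\,A-\alpha_i)$, so each term is a one-dimensional Beta integral (equivalently, one may work directly with the Dirichlet normalizing constant).

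To evaluate $\E[\pi_i\log\pi_i]$ I would introduce an auxiliary exponent $s$ and use $\pi_i\log\pi_i = \left.\frac{\partial}{\partial s}\pi_i^{\,1+s}\right|_{s=0}$, so that $\E[\pi_i\log\pi_i] = \left.\frac{\partial}{\partial s}\E[\pi_i^{\,1+s}]\right|_{s=0}$. The fractional moment is read off directly from the Beta (equivalently Dirichlet) normalizing constant,
\begin{equation*}
\E[\pi_i^{\,1+s}] = \frac{\Gamma(\alpha_i+1+s)\,\Gamma(A)}{\Gamma(\alpha_i)\,\Gamma(A+1+s)},
\end{equation*}
which at $s=0$ equals $\E[\pi_i] = \alpha_i/A$. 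Differentiating its logarithm and using $(\log\Gamma)' = \psi_0$ gives $\frac{\partial}{\partial s}\log\E[\pi_i^{\,1+s}] = \psi_0(\alpha_i+1+s) - \psi_0(A+1+s)$, hence $\E[\pi_i\log\pi_i] = \frac{\alpha_i}{A}\bigl(\psi_0(\alpha_i+1) - \psi_0(A+1)\bigr)$. Summing over $i$ and using $\sum_i\alpha_i = A$, so that $\sum_i\alpha_i/A = 1$, collapses the $\psi_0(A+1)$ terms into a single $+\psi_0(A+1)$ and leaves $-\sum_i\frac{\alpha_i}{A}\psi_0(\alpha_i+1)$, which is the claimed identity \eqref{eq:DirMeanEntropy}.

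The one technical point that needs care --- and the step I expect to be the only real obstacle --- is justifying the interchange of $\frac{\partial}{\partial s}$ and $\E[\cdot]$. This is routine dominated convergence: for $s$ in a compact neighborhood of $0$ (say $|s|\le 1/2$), the integrand $\pi_i^{\,1+s}|\log\pi_i|$ is bounded above by $|\log\pi_i|$ (using $0\le\pi_i\le 1$ and $1+s>0$), and $|\log\pi_i|$ is integrable against the $\Beta(\alpha_i,A-\alpha_i)$ density since $\alpha_i>0$ and $A-\alpha_i>0$; equivalently, one may differentiate the Euler-integral identity $\int_0^1 p^{\alpha_i+s}(1-p)^{A-\alpha_i-1}\,dp = \Gamma(\alpha_i+1+s)\Gamma(A-\alpha_i)/\Gamma(A+1+s)$ under the integral sign, which is legitimate because the integrand is jointly smooth in $(p,s)$ and the integral converges locally uniformly in $s$. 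Everything else is elementary manipulation of Gamma and digamma functions.
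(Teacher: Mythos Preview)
Your proof is correct, but it takes a different route from the paper's. The paper works directly with the full Dirichlet integral on the simplex, writing the constraint $\sum_i\pi_i=1$ as a delta function and then using Laplace transforms to evaluate $\int \pi_i^{\alpha_i}\,\delta(\sum_j\pi_j-1)\prod_{j\neq i}\pi_j^{\alpha_j-1}\,d\vpi$; the replica differentiation is taken in the Dirichlet parameter $\alpha_i$ itself. You instead collapse to one dimension first, using the aggregation property to get the Beta marginal $\pi_i\sim\Beta(\alpha_i,A-\alpha_i)$, and then differentiate in an auxiliary exponent $s$. Your approach is more elementary---it avoids the Laplace-transform machinery entirely and reduces the problem to a standard Beta moment identity---while the paper's simplex-level computation has the advantage of extending uniformly to the second-moment calculation (cross terms $\E[\pi_i\pi_j\log\pi_i\log\pi_j]$) without needing a separate reduction. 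Either way the core idea is the same replica differentiation, and your justification of the $\partial_s$--$\E$ interchange via dominated convergence is the right level of care; the paper's proof simply swaps $d/d\alpha_i$ with the integral without comment.
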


\begin{proof} First, let $c$ be the normalizer of Dirichlet, $c=\frac{
    \prod_{j} \Gamma(\alpha_j)}{ \Gamma(A)}$ and let $\Laplace$ denote the Laplace transform (on $\pi$ to $s$). Now, 
  \begin{align*}
    c \E[H|\valpha] &= \int \left( -\sum_i \pi_i \log_2
      \pi_i     \right) \delta({\scriptstyle \sum_i} \pi_i - 1) \prod_j \pi_j^{\alpha_j - 1} d\vpi
    \\
    &= -\sum_i \int \left( \pi_i^{\alpha_i} \log_2 \pi_i \right)
    \delta({\scriptstyle \sum_i} \pi_i - 1) \prod_{j \neq i}
    \pi_j^{\alpha_j - 1} d\vpi
    \\
    &= - \sum_i \int \left(
      \deriv{}{(\alpha_i)}\pi_i^{\alpha_i} \right)
    \delta({\scriptstyle \sum_i} \pi_i - 1) \prod_{j \neq i}
    \pi_j^{\alpha_j - 1} d\vpi
    \\
    &= - \sum_i \deriv{}{(\alpha_i)} \int
    \pi_i^{\alpha_i} \delta({\scriptstyle \sum_i} \pi_i - 1)
    \prod_{j \neq i} \pi_j^{\alpha_j - 1} d\vpi
    \\
    &= - \sum_i \deriv{}{(\alpha_i)} \Laplace\inv
    \left[ \Laplace(\pi_i^{\alpha_i}) \prod_{j \neq i}
      \Laplace(\pi_j^{\alpha_j - 1}) \right](1)
    \\
    &= - \sum_i \deriv{}{(\alpha_i)} \Laplace\inv
    \left[ \frac{ \Gamma(\alpha_i +1) \prod_{j \neq i}
        \Gamma({\alpha_j}) }{ s^{\sum_k (\alpha_k) + 1} }
    \right](1)
    \\
    &= - \sum_i \deriv{}{(\alpha_i)} \left[ \frac{
        \Gamma(\alpha_i +1) }{ \Gamma(\sum_k (\alpha_k) + 1) }
    \right] \prod_{j \neq i} \Gamma({\alpha_j})
    \\
    &= - \sum_i \frac{ \Gamma(\alpha_i +1) }{
        \Gamma(\sum_k \alpha_k + 1) }\left[ \psi_0(\alpha_i+ 1)  - \psi_0(A+ 1) \right]
	\prod_{j \neq i} \Gamma({\alpha_j})
    \\
    &=  \left[ \psi_0(A+1) - 
	\sum_{i=1}^{\alphabetSize} \frac{ \alpha_i }{A} 
	\psi_0(\alpha_i+ 1) \right]
    \frac{ \prod_{j} \Gamma({\alpha_j}) }{ \Gamma(A) }.
  \end{align*}
\end{proof}

\subsection{Variance entropy of finite Dirichlet}
We derive $\E[H^2(\vpi)|\valpha]$. In practice we compute
$\var[H(\vpi)|\valpha] = \E[H^2(\vpi)|\valpha] -
\E[H(\vpi)|\valpha]^2$.

\begin{proposition}
  For $\vpi \sim \Dir(\alpha_1, \alpha_2, \dots,
  \alpha_{\alphabetSize})$, such that $\sum_{i=1}^{\alphabetSize}
  \alpha_i = A$, and letting $\valpha = \left( \alpha_1, \alpha_2,
    \dots, \alpha_{\alphabetSize}\right)$, we have
  \begin{align}\label{eq:DirVarianceEntropy}
    \E[ H^2(\vpi) |\valpha] &= 
\sum_{i\neq k}
  \frac{ \alpha_i \alpha_k 
}{
  (A+ 1)(A)} 
I_{ik} 
+
\sum_i 
\frac{
  \alpha_i(\alpha_i+1)
} {
  (A + 1)(A)
}
J_i
\\ \nonumber
I_{ik} &= 
\left(
  \psi_0(\alpha_k+1)-\psi_0(A + 2)
\right) 
  \left(
    \psi_0(\alpha_i+1)
    \right. \\ &\qquad\left.-\psi_0(A + 2)
  \right) -\psi_1(A + 2) \nonumber
\\ \nonumber
J_i &=
(\psi_0( \alpha_i+2) - 
  \psi_0( A + 2))^2 +
  \psi_1(\alpha_i+2) \\&\qquad- \psi_1(A + 2) \nonumber
  \end{align}
\end{proposition}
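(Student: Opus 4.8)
The plan is to repeat the ``replica trick'' used above for $\E[H|\valpha]$, but now carrying two factors of $\pi$. Since $H(\vpi)=-\sum_i\pi_i\log\pi_i$, expanding the square gives
\[
H^2(\vpi) = \sum_i \pi_i^2(\log\pi_i)^2 \;+\; \sum_{i\neq k}\pi_i\,\pi_k\,\log\pi_i\,\log\pi_k,
\]
so it suffices to compute $\E[\pi_i^2(\log\pi_i)^2\,|\,\valpha]$ (the diagonal terms) and $\E[\pi_i\pi_k\log\pi_i\log\pi_k\,|\,\valpha]$ for $i\neq k$ (the off-diagonal terms), then sum. As in the mean derivation I keep $c=\prod_j\Gamma(\alpha_j)/\Gamma(A)$ for the Dirichlet normalizer.

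For the off-diagonal terms I would write $\pi_i\pi_k\log\pi_i\log\pi_k = \bigl(\tfrac{\partial}{\partial\alpha_i}\pi_i^{\alpha_i}\bigr)\bigl(\tfrac{\partial}{\partial\alpha_k}\pi_k^{\alpha_k}\bigr)$, insert this into the same delta-function integral appearing in the proof of \eqref{eq:DirMeanEntropy}, pull both derivatives outside the integral, and evaluate what remains with the Laplace-transform identity $\Laplace(\pi^{\gamma-1})(s)=\Gamma(\gamma)/s^{\gamma}$. Because two exponents have been raised by one, the remaining integral is $\Gamma(\alpha_i+1)\Gamma(\alpha_k+1)\prod_{j\neq i,k}\Gamma(\alpha_j)/\Gamma(A+2)$, so
\[
c\,\E\!\left[\pi_i\pi_k\log\pi_i\log\pi_k\,|\,\valpha\right] \;=\; \frac{\partial^2}{\partial\alpha_i\,\partial\alpha_k}\,\frac{\Gamma(\alpha_i+1)\Gamma(\alpha_k+1)\prod_{j\neq i,k}\Gamma(\alpha_j)}{\Gamma(A+2)}.
\]
Writing this ratio as the exponential of a sum of log-gamma functions and differentiating, the derivative in $\alpha_k$ brings down $\psi_0(\alpha_k+1)-\psi_0(A+2)$ (the $\psi_0(A+2)$ because $A=\sum_j\alpha_j$ depends on $\alpha_k$), and the subsequent derivative in $\alpha_i$ gives the product of two such digamma differences together with the cross term $-\psi_1(A+2)$; this is exactly the factor $I_{ik}$. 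Using $\Gamma(x+1)=x\Gamma(x)$ one has $\Gamma(\alpha_i+1)\Gamma(\alpha_k+1)\prod_{j\neq i,k}\Gamma(\alpha_j)/\Gamma(A+2) = \tfrac{\alpha_i\alpha_k}{(A+1)A}\,c$, and dividing through by $c$ produces the first sum in \eqref{eq:DirVarianceEntropy}.

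The diagonal terms are handled identically but with a single exponent: $\pi_i^2(\log\pi_i)^2 = \tfrac{\partial^2}{\partial\alpha_i^2}\pi_i^{\alpha_i+1}$, so after the same manipulations
\[
c\,\E\!\left[\pi_i^2(\log\pi_i)^2\,|\,\valpha\right] \;=\; \frac{\partial^2}{\partial\alpha_i^2}\,\frac{\Gamma(\alpha_i+2)\prod_{j\neq i}\Gamma(\alpha_j)}{\Gamma(A+2)},
\]
and the second derivative of $\exp\!\bigl(\log\Gamma(\alpha_i+2)-\log\Gamma(A+2)\bigr)$ yields $\bigl(\psi_0(\alpha_i+2)-\psi_0(A+2)\bigr)^2+\psi_1(\alpha_i+2)-\psi_1(A+2)=J_i$ times the ratio, which equals $\tfrac{\alpha_i(\alpha_i+1)}{(A+1)A}\,c$. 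Adding the diagonal and off-diagonal contributions gives \eqref{eq:DirVarianceEntropy}. I expect the only real difficulties to be bookkeeping: keeping straight which exponents get shifted (and hence whether the normalizing gamma is $\Gamma(A+1)$ or $\Gamma(A+2)$), correctly carrying out the mixed and repeated derivatives of the gamma ratios — noting that the trigamma terms arise solely because $A$ contains the differentiation variable — and justifying the interchange of differentiation with integration, which is the same analyticity argument already used for \eqref{eq:DirMeanEntropy}.
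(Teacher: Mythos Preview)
Your proposal is correct and follows essentially the same route as the paper's proof: split $H^2$ into diagonal and off-diagonal sums, apply the replica trick (once per factor) to convert each $\pi\log\pi$ into a parameter derivative, evaluate the resulting Dirichlet-type integral via the Laplace-transform identity, and then differentiate the gamma ratio to produce the digamma/trigamma combinations $I_{ik}$ and $J_i$. The only cosmetic difference is that the paper carries out the diagonal second derivative after substituting $z=\alpha_i+1$, whereas you differentiate the log-gamma ratio directly in $\alpha_i$; the computations are identical.
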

\begin{proof}
We can evaluate the second moment in a manner similar to the mean entropy
above. First, we split the second moment into square and cross
terms. To evaluate the integral over the cross terms, we apply the
``replica trick'' twice. Letting $c$ be the normalizer of Dirichlet,
$c=\frac{ \prod_{j} \Gamma(\alpha_j)}{ \Gamma(A)}$ we have,
\begin{align*} 
&c\E[H^2|\valpha] = 
	\int \left(
	   -\sum_i \pi_i \log_2 \pi_i 	\right) ^2
	\delta({\scriptstyle \sum_i} \pi_i - 1)
	\prod_j \pi_j^{\alpha_j - 1}
	d\vpi
\\
&= 
	\sum_i \int 
        \left(
          \pi_i^2 {\log_2}^2 \pi_i
        \right)
	\delta({\scriptstyle \sum_i} \pi_i - 1)
	\prod_j \pi_j^{\alpha_j - 1}
	d\vpi \\
& + \sum_{i\neq k} \int 
        \left(
          \pi_i \log_2 \pi_i
        \right)
        \left(
          \pi_k \log_2 \pi_k
        \right)
	\delta({\scriptstyle \sum_i} \pi_i - 1)
	\prod_j \pi_j^{\alpha_j - 1}
	d\vpi 
\\
&= 
	\sum_i \int 
         \pi_i^{\alpha_i+1} {\log_2}^2 \pi_i
	\delta({\scriptstyle \sum_i} \pi_i - 1)
	\prod_{j\neq i} \pi_j^{\alpha_j - 1}
	d\vpi \\
& + \sum_{i\neq k} \int 
        \left(
          \pi_i^{\alpha_i} \log_2 \pi_i
        \right)
        \left(
          \pi_k^{\alpha_k} \log_2 \pi_k
        \right)
	\delta({\scriptstyle \sum_i} \pi_i - 1)
	\prod_{j\not \in \{i,k\}} \pi_j^{\alpha_j - 1}
	d\vpi
\\
&= 
	\sum_i  
        \frac{d^2}{d(\alpha_i+1)^2}
        \int 
         \pi_i^{\alpha_i+1}
	\delta({\scriptstyle \sum_i} \pi_i - 1)
	\prod_{j\neq i} \pi_j^{\alpha_j - 1}
	d\vpi \\
& +  
       \sum_{i\neq k} 
        \frac{d}{d\alpha_i}    \frac{d}{d\alpha_k}
        \int 
        \left(
          \pi_i^{\alpha_i}
        \right)
        \left(
          \pi_k^{\alpha_k}
        \right)
	\delta({\scriptstyle \sum_i} \pi_i - 1)
	\prod_{j\not \in \{i,k\}} \pi_j^{\alpha_j - 1}
	d\vpi
\end{align*}

Assuming $i \neq k$, these will be the cross terms.

\begin{align*}
      \qquad&
	\int 
	 (\pi_i \log_2 \pi_i)  (\pi_k \log_2 \pi_k)
	\delta({\scriptstyle \sum_i} \pi_i - 1)
	\prod_{j} \pi_j^{\alpha_j - 1}
	d\vpi
\\
&=       
 \deriv{}{\alpha_i} \deriv{}{\alpha_k}
	\int 
	 (\pi_i^{\alpha_i})  	 (\pi_k^{\alpha_k})  
	\delta({\scriptstyle \sum_i} \pi_i - 1)
	\prod_{j\not \in \{i,k\}} \pi_j^{\alpha_j - 1}
	d\vpi
\\
&=
	\deriv{}{\alpha_i}\deriv{}{\alpha_k}
	\left[
	    \frac{
		\Gamma(\alpha_i +1)
                \Gamma(\alpha_k +1)
	    }{
                \Gamma(A + 2)
	    }
	\right]
                \prod_{j\not \in \{i,k\}}
		\Gamma({\alpha_j})
\\
&=
	\deriv{}{\alpha_k}
	\left[
	 \frac{
		\alpha_i
                \Gamma(\alpha_k +1)
   	    }{
	     \Gamma(A+2)
	    }
           \psi_0(\alpha_i+ 1)  
         \right. \\ &\qquad\qquad\left.
           - \frac{
 		\alpha_i
                \Gamma(\alpha_k +1)
	    }{
	     \Gamma(A+2)
	    }
            \psi_0(A+ 2) 
	\right] \prod_{j \neq k} \Gamma({\alpha_j})
\\
&=
	\deriv{}{\alpha_k}
	\left[
	 \frac{
		\alpha_i
                \psi_0(\alpha_k +1)
   	    }{
	     \Gamma(A+2)
	    }
           \psi_0(\alpha_i+ 1)  
         \right. \\ &\qquad\qquad\left.
           - 
           \frac{
 		\alpha_i
                \Gamma(\alpha_k +1)
	    }{
	     \Gamma(A+2)
	    }
            \psi_0(A+ 2) 
	\right] \prod_{j \neq k} \Gamma({\alpha_j})
\\
&=
    \frac{ \alpha_i \alpha_k }{\Gamma(A + 2)}
\left[
  (\psi_0(\alpha_k+1)-\psi_0(A + 2))
\right. \\ &\qquad\left.   (\psi_0(\alpha_i+ 1) 
  -\psi_0(A + 2))-
  \psi_1(A + 2)
\right] \prod_j \Gamma({\alpha_j})
\\
&=
\frac{ \alpha_i \alpha_k }{(A+ 1)(A)}\left[
  \left(
	\psi_0(\alpha_k+1)-\psi_0(A + 2)
      \right) 
    \right. \\ &\qquad\left.
      \left(
	\psi_0(\alpha_i+1)-\psi_0(A + 2)
    \right)
	-\psi_1(A + 2)
    \right] 
    \frac{\prod_j \Gamma({\alpha_j})}{\Gamma(A)}
\end{align*}

\begin{align*}
&
\frac{d^2}{d(\alpha_i+1)^2}
        \int 
         \pi_i^{\alpha_i+1}
	\delta({\scriptstyle \sum_i} \pi_i - 1)
	\prod_{j\neq i} \pi_j^{\alpha_j - 1}
	d\vpi 
\\
&=
	\frac{d^2}{d(\alpha_i+1)^2}
	\left[
	    \frac{
		\Gamma(\alpha_i +2)
	    }{
                \Gamma(A + 2)
	    }
	\right]
                \prod_{j\neq i}
		\Gamma({\alpha_j})
\\
&=
	\frac{d^2}{dz^2}
	\left[
	    \frac{
		\Gamma(z +1)
	    }{
                \Gamma(z + c )
	    }
	\right]
                \prod_{j\neq i}
		\Gamma({\alpha_j}),~~~\left\{ c = A +2 -
                    (\alpha_i+1)\right\}
\\
&= 
\frac{\Gamma(1+z)}{\Gamma(c+z)}
\left[ (\psi_0( 1 + z) - 
\right. \\ &\qquad\left.
  \psi_0( c + z))^2 + \psi_1( 1 + z) - \psi_1(
  c + z)\right]
\prod_{j\neq i}
\Gamma({\alpha_j})
\\
&= 
\frac{
  z(z-1)
} {
  (c+z-1)(c+z-2)
}
\left[ (\psi_0( 1 + z) - \psi_0( c + z))^2 
\right. \\ &\qquad\qquad\qquad\left.
  + \psi_1( 1 + z) - \psi_1(c + z)\right]
\frac{
  \prod_{j}
  \Gamma({\alpha_j})
} {
  \Gamma(A)
}
\\
&= 
\frac{
  (\alpha_i+1)(\alpha_i)
} {
  (A + 1)(A )
}
\left[ (\psi_0( \alpha_i+2) - \psi_0( A + 2))^2 +
  \psi_1(\alpha_i+2) 
  \right. \\ &\qquad\qquad\qquad\left.
  - \psi_1(A + 2)\right]
\frac{
  \prod_{j}
  \Gamma({\alpha_j})
} {
  \Gamma(A)
}
\end{align*}

Summing over all terms and adding the cross and square terms, we
recover the desired expression for $\E[H^2(\vpi)|\valpha]$.
\end{proof}

\subsection{Prior entropy mean and variance under $\PY$}
We derive the prior entropy mean and variance of a $\PY$ distribution with fixed
parameters $\alpha$ and $d$, $\E_\vpi[H(\vpi)|d,\alpha]$ and $\var\vpi[H(\vpi)|d,\alpha]$. We first
prove our Proposition~\ref{prop:sizebiased:mean} (mentioned in
\cite{Pitman97}). This proposition establishes the identity $\E \left[ \sum_{i=1}^\infty f(\pi_i) \Big| \alpha \right] = \int_0^1 \frac{f(\tilde \pi_1) }{\tilde
  \pi_1} p(\tilde \pi_1 | \alpha) d\tilde \pi_1$  which will allow us to compute
expectations over $\PY$ using only the distribution of the first size
biased sample, $\tilde\pi_1$.
\begin{proof}[Proof of Proposition~\ref{prop:sizebiased:mean}]

First we validate \eqref{eq:sizebiasedexp}. Writing out the general form of the size-biased sample, 
\[
p(\tilde \pi_1 = x| \vpi)  = \sum_{i=1}^\infty \pi_i \delta(x-\pi_i),
\]
we see that 
\begin{align*}
\E_{\tilde \pi_1 } \left[ \frac{f(\tilde
        \pi_1)}{\tilde \pi_1} \right] &=   \int_0^1 \frac{f(x)}{x}
    p(\tilde \pi_1 = x)dx
\\
&= \int_0^1
  \E_\vpi\left[\frac{f(x)}{x} p(\tilde \pi_1 = x | \vpi)
  \right] dx
\\
&= \int_0^1
  \E_\vpi\left[ \sum_{i=1}^\infty \frac{f(x)}{x}\pi_i \delta(x-\pi_i) \right] dx
\\
&= \E_\vpi\left[\int_0^1
  \sum_{i=1}^\infty \frac{f(x)}{x}\pi_i \delta(x-\pi_i) dx\right]
\\
&= \E_\vpi\left[
  \sum_{i=1}^\infty \int_0^1 \frac{f(x)}{x}\pi_i \delta(x-\pi_i) dx\right]
\\
&= \E_\vpi \left[ \sum_{i=1}^\infty f(\pi_i) \right],
\end{align*}
where the interchange of sums and integrals is justified by Fubini's
theorem.

A similar method validates \eqref{eq:sizebiasedexpp}. We will need the
second size-biased sample in addition to the first. We begin with the
sum inside the expectation on the left--hand side of \eqref{eq:sizebiasedexpp},
\begin{align}
  &\sum_i \sum_{j \neq i} g(\pi_i,\pi_j)
  \\
    &=
    \frac{\sum_i \sum_{j \neq i} g(\pi_i,\pi_j)}
    {p(\tilde\pi_1 = \pi_i, \tilde\pi_2 = \pi_j)}
    p(\tilde\pi_1 = \pi_i, \tilde\pi_2 = \pi_j)
    \\
    &=
    \sum_i \sum_{j \neq i} \frac{g(\pi_i,\pi_j)}{\pi_i\pi_j} (1 - \pi_i)
    p(\tilde\pi_1 = \pi_i, \tilde\pi_2 = \pi_j)
    \\
    &=
    \E_{\tilde\pi_1, \tilde \pi_2}\left[
      \frac{g(\tilde \pi_1,\tilde \pi_2)}{\tilde \pi_1 \tilde \pi_2} (1 - \tilde\pi_1)
	\Big| \vpi
    \right]
\end{align}
where the joint distribution of size biased samples is given by,
\begin{align*}
    p(\tilde\pi_1 = \pi_i, \tilde\pi_2 = \pi_j)  &=
     p(\tilde\pi_1 = \pi_i) p(\tilde\pi_2 = \pi_j | \tilde\pi_1 =
    \pi_i) 
    \\
    &= \pi_i \cdot \frac{\pi_j}{1 - \pi_i}.
\end{align*}

\end{proof}

As this identity is defined for any additive functional $f$ of $\vpi$; we
can employ it to compute the first two moments of entropy. For PYP (and DP when $d=0$), the first size-biased sample is distributed according to:
\begin{equation}
\tilde \pi_1 \sim \Beta(1-d,\alpha+d)
\end{equation}

Proposition~\ref{prop:sizebiased:mean} gives the mean entropy directly. Taking $f(x) = -x
\log(x)$ we have, 
\begin{equation*}
  \E [H(\vpi)|d,\alpha] = -\E_\alpha [ \log(\pi_1)] = \psi_0(\alpha+1) - \psi_0(1-d),
    \end{equation*}

The same method may be used to obtain the prior variance, although the
computation is more involved. For the variance, we will need the
second size-biased sample in addition to the first. The second
size-biased sample is given by,

\begin{equation}
\tilde \pi_2 = (1-\tilde \pi_1) v_2, \quad v_2 \sim \Beta(1-d,\alpha+2d)
\end{equation}

We will compute the second moment explicitly, splitting $H(\vpi)^2$
into square and cross terms,

\begin{align}
\nonumber
    \E[\left( H(\vpi) \right)^2|d,\alpha]
    &= \E\left[\left.\left( - \sum_i \pi_i \log(\pi_i) \right)^2\right|d,\alpha\right]
    \\
   &= \E\left[\left.\sum_i \left(\pi_i \log(\pi_i)
       \right)^2\right|d,\alpha\right]
   \\
   &+ \E\left[\left.\sum_i \sum_{j \neq i} \pi_i \pi_j \log(\pi_i) \log(\pi_j)\right|d,\alpha\right]
    \label{eq:PY:prior:entropy:2nd}
\end{align}
The first term follows directly from \eqref{eq:sizebiasedexp},
\begin{align}
&\E\left[\left.\sum_i \left(\pi_i \log(\pi_i) \right)^2\right|d,\alpha\right]
     = \int_0^1 x (-\log(x))^2 p(x|d,\alpha) \dm{x}
    \nonumber \\
    &= 
	B^{-1}(1-d,\alpha+d) \int_0^1 x \log^2(x) x^{1-d} (1-x)^{\alpha+d-1} \dm{x}
    \nonumber \\
    &= 
	\frac{1-d}{\alpha+1} \left[(\psi_0(2-d) - \psi_0(2+\alpha))^2 + \psi_1(2-d) - \psi_1(2+\alpha)\right]
\end{align}

The second term of \eqref{eq:PY:prior:entropy:2nd}, requires the
first two size biased samples, and follows from
\eqref{eq:sizebiasedexpp} with $g(x,y) = \log(x)\log(y)$. For the PYP
prior, it is easier to integrate on $V_1$ and $V_2$, rather than the
size biased samples. The second term is then (note that we let
$\gamma = B^{-1}(1-d, \alpha+2d)$ and $\zeta = B^{-1}(1-d, \alpha+d)$),
\begin{align*}
    &\E\left[
	\E\left[
	    \log(\tilde\pi_1) \log(\tilde\pi_2) (1 - \pi_1)
	    | \vpi
	\right]
	| \alpha
    \right]
\\
    &=
    \E\left[
	\E\left[
	    \log(V_1) \log((1-V_1) V_2) (1 - V_1)
	    | \vpi
	\right]
	| \alpha
    \right]
    \\
    &=
	\zeta 	\gamma     \int_0^1 \int_0^1
	\log(v_1) \log((1-v_1) v_2) (1 - v_1)
	v_1^{1 - d}
	(1 - v_1)^{\alpha+d-1}
        \\
	&\qquad \qquad \qquad \qquad
        \times v_2^{1 - d}
	(1 - v_2)^{\alpha+2d-1}
    \dm{v_1} \dm{v_2}
    \\
    &=
    \zeta     \left[
    \int_0^1
	\log(v_1) \log(1-v_1) (1 - v_1) v_1^{1 - d} 
        (1 - v_1)^{\alpha+d-1}
        \dm{v_1}
      \right.
    \\
    &\quad+\left.
    \gamma     \int_0^1 
    \log(v_1) (1 - v_1) v_1^{1 - d} (1 - v_1)^{\alpha+d-1}
    \right.
    \\
    &\qquad\qquad\qquad\left.
      \times\int_0^1
	\log(v_2) v_2^{1 - d} (1 - v_2)^{\alpha+2d-1}
    \dm{v_1} \dm{v_2}
    \right]
    \\
    &=
    \frac{\alpha+d}{\alpha+1}\left[ (\psi_0(1-d)-\psi_0(2+\alpha))^2 - \psi_1(2+\alpha)\right]
    \label{eq:PY:prior:entropy:var:derivation}
\end{align*}

Finally combining the terms, the variance of the entropy under PYP prior is
\begin{align}
  &\var[H(\vpi)|d,\alpha] =
\\
&\frac{1-d}{\alpha+1} \left[(\psi_0(2-d) - \psi_0(2+\alpha))^2 + \psi_1(2-d) - \psi_1(2+\alpha)\right]
\nonumber\\
&\quad\quad
+\frac{\alpha+d}{\alpha+1}\left[ (\psi_0(1-d)-\psi_0(2+\alpha))^2 - \psi_1(2+\alpha)\right]
\nonumber\\
&\qquad
-(\psi_0(1+\alpha) - \psi_0(1-d))^2
\nonumber\\
&=\frac{\alpha+d}{(\alpha+1)^2 (1-d)} + \frac{1-d}{\alpha+1}\psi_1(2-d) - \psi_1(2+\alpha)
\end{align}

We note that the expectations over the finite Dirichlet may also be
derived using this formula by letting the $\tilde \vpi$ be the first
size-biased sample of a finite Dirichlet on $\Delta_{\alphabetSize}$. 

\subsection{Posterior Moments of PYP} 
\label{sec:appendix:posterior}
First, we discuss the form of the PYP posterior, and introduce
independence properties that will be important in our derivation of the
mean. We recall that the PYP posterior, $\vpi_{\mathrm{post}}$, of
\eqref{eq:postPY} has three stochastically independent components: Bernoulli
$p_\ast$, PY $\vpi$, and Dirichlet $\vp$.  
  
\textbf{Component expectations:} 
From the above derivations for expectations under the PYP and Dirichlet
distributions as well as the Beta integral identities (see e.g., \cite{Archer2012b}),  we find expressions for  
$\E_{\vp}\left[ H(\vp) | d,\alpha \right]$,
$E_{\vpi}\left[ H(\vpi) |d,\alpha\right]$, and 
$\E_{p_\ast}\left[ H(p_\ast)\right]$. 
\begin{align*}
  \E [H(\vpi)|d,\alpha] &= \psi_0(\alpha+1) - \psi_0(1-d)
\\
  \E_{p_\ast}[ H(p_\ast) ] 
    &= \psi_0( \alpha + N + 1) 
      - \frac{\alpha +Kd}{\alpha+N} \psi_0(\alpha+Kd +1) 
\\ &\qquad\quad
      - \frac{N-Kd}{\alpha+N} \psi_0(N-Kd+1)
\\
\E_{\vp}[H(\vp)|d,\alpha] 
    &= \psi_0(N-Kd+1) - \sum_{i=1}^K  \frac{n_i-d}{N-Kd} \psi_0(n_i-d+1)
\end{align*}
where by a slight abuse of notation we define the entropy of $p_\ast$
as $H(p_\ast) = -(1-p_\ast)\log(1-p_\ast) - p_\ast \log p_\ast$. We
use these expectations below in our computation of the final posterior
integral.

\textbf{Derivation of posterior mean:}
We now derive the analytic form of the posterior mean, \eqref{eq:posterior:mean}.
\begin{align*}
  &\E[ H(\pi_{\mathrm{post}})| d,\alpha] = \E\left[ - \sum_{i=1}^K p_i
  \log{p_i} - p_\ast \sum_{i=1}^\infty \pi_i \log{ p_\ast \pi_i} \Big|
  d,\alpha \right]\nonumber
\\\nonumber
&= \E\left[  - (1-p_\ast)\sum_{i=1}^K \frac{p_i}{1-p_\ast}
\log{\left(\frac{p_i}{1-p_\ast}\right)} 
\right.\\&\quad \left. 
- (1-p_\ast)\log(1-p_\ast) - p_\ast
\sum_{i=1}^\infty \pi_i \log{\pi_i} - p_\ast \log{ p_\ast} \Big|
  d,\alpha \right]
\\\nonumber
&= \E\left[  - (1-p_\ast)\sum_{i=1}^K \frac{p_i}{1-p_\ast}
\log{\left(\frac{p_i}{1-p_\ast}\right)} 
\right.\\&\qquad\qquad\qquad \left. 
- p_\ast
\sum_{i=1}^\infty \pi_i \log{\pi_i}    + H(p_\ast) \Big|
  d,\alpha \right]
\\\nonumber
&= \E\left[ \E\left[ - (1-p_\ast)\sum_{i=1}^K \frac{p_i}{1-p_\ast}
\log{\left(\frac{p_i}{1-p_\ast}\right)} 
\right.\right.\\&\qquad\qquad\qquad \left. \left.
- p_\ast \sum_{i=1}^\infty \pi_i \log{\pi_i}    + H(p_\ast)  ~\Big| ~ p_\ast
\right] \Big|
  d,\alpha \right]
\\\nonumber
&= \E\left[ \E\left[(1-p_\ast) H(\vp)+ p_\ast H(\vpi) + H(p_\ast)  ~\Big| ~ p_\ast
  \right] \Big|
  d,\alpha \right]\nonumber
\\
&= \E_{p_\ast}\left[ (1-p_\ast) \E_{\vp}\left[ H(\vp) | d,\alpha \right] +  p_\ast\E_{\vpi}\left[ H(\vpi)
  |d,\alpha\right] + H(p_\ast)    \right]\nonumber
\end{align*}

using the formulae for $\E_{\vp}\left[ H(\vp) | d,\alpha \right]$, $\E_{\vpi}\left[ H(\vpi)
  |d,\alpha\right]$, and $\E_{p_{\ast}}[H(p_{\ast})]$ and rearranging terms, we obtain \eqref{eq:posterior:mean}, 
\begin{align*}
&\E[ H(\pi_{\mathrm{post}})| d,\alpha] 
    = \frac{A}{\alpha+N}  \E_{\vp}[ H(\vp) ] 
    \\  &\qquad\qquad\qquad 
    + \frac{\alpha + Kd}{\alpha+N} \E_{\vpi}[ H(\vpi) ]  
    + \E_{p_\ast}[ H(p_\ast) ] \nonumber
\\ \nonumber
    &= \frac{A}{\alpha+N} \left[\psi_0(A+1) - \sum_{i=1}^K
      \frac{\alpha_i}{A} \psi_0(\alpha_i+1)\right] 
    \\&\quad
    + \frac{\alpha + Kd}{\alpha+N}
    \left[ \psi_0(\alpha+Kd+1) 
      - \psi_0(1-d)\right] +
    \\\nonumber
    &\quad \psi_0( \alpha + N + 1) - \frac{\alpha +Kd}{\alpha+N}
    \psi_0(\alpha+Kd +1) - \frac{A}{\alpha+ N} \psi_0(A+1)
\\\nonumber
&= \psi_0(\alpha+N+1) - \frac{\alpha+Kd}{\alpha+N}\psi_0(1-d) - 
    \\&\qquad \qquad \qquad \qquad \qquad
\frac{A}{\alpha+N} \left[\sum_{i=1}^K
    \frac{\alpha_i}{A} \psi_0(\alpha_i+1)\right] \nonumber
\\
&= \psi_0(\alpha+N+1) - \frac{\alpha+Kd}{\alpha+N}\psi_0(1-d) -
    \\&\qquad \qquad \qquad \qquad \qquad
    \frac{1}{\alpha+N} \left[
\sum_{i=1}^K
    (n_i - d) \psi_0(n_i - d+1)\right] \nonumber
\end{align*}

\textbf{Derivation of posterior variance:}
We continue the notation from the subsection above. In order to
exploit the independence properties of $\pi_{\mathrm{post}}$ we first apply
the law of total variance to obtain \eqref{eq:PY:posterior:variance},

\begin{align}
 \label{eq:PY:posterior:variance}
 \var[ H(\pi_{\mathrm{post}})| d,\alpha] &= \var_{p_\ast}\left[
    \E_{\vpi, \vp}[H(\pi_{\mathrm{post}}) ]  \Big| d,\alpha\right] \nonumber
      \\&\qquad 
  + \E_{p_\ast}\left[
    \var_{\vpi, \vp}[H(\pi_{\mathrm{post}}) ] \Big| d,\alpha\right]
\end{align}

We now seek expressions for each term in
\eqref{eq:PY:posterior:variance} in terms of the expectations already
derived. 

\textit{Step 1:} For the right-hand term of
\eqref{eq:PY:posterior:variance}, we use the independence properties of
$\pi_{\mathrm{post}}$ to express the variance in terms of PYP,
Dirichlet, and Beta variances, 
\begin{align}
\label{eq:PY:posterior:variance:term1}
&\E_{p_\ast}\left[\var_{\vpi, \vp}[H(\pi_{\mathrm{post}}) | p_\ast] \Big|
  d,\alpha\right] \\&= \E_{p_\ast}\left[  (1-p_\ast)^2\var_{\vp}[H(\vp)] 
  + p_\ast^2 \var_{\vpi}[H(\vpi)]  \Big|d,\alpha\right] \nonumber
\\
&= \frac{(N-Kd)(N-Kd+1)}{(\alpha+N)(\alpha+N+1)}\var_{\vp}[H(\vp)]\nonumber
\\
&\qquad \qquad + \frac{(\alpha+Kd)(\alpha+Kd+1)}{(\alpha+N)(\alpha+N+1)}
\var_{\vpi}[H(\vpi)]  
\end{align}
\textit{Step 2:} In the left-hand term of
\eqref{eq:PY:posterior:variance}  the variance is with respect to the
$\Beta$ distribution, while the inner expectation is precisely the
posterior mean we derived above. Expanding, we obtain,
\begin{align}
\label{eq:PY:posterior:variance:term2}
\nonumber
&\var_{p_\ast}\left[  \E_{\vpi, \vp}[H(\pi_{\mathrm{post}}) | p_\ast]  \Big| d,\alpha\right]\\&= \var_{p_\ast}\left[ (1-p_\ast) \E_{\vp}\left[ H(\vp) \right] +  p_\ast\E_{\vpi}\left[ H(\vpi)
    | p_\ast \right] + H(p_\ast) \Big|
  d,\alpha \right] 
\end{align}
To evaluate this integral, we introduce some new notation,
\begin{align*}
\mathbf{A} &= \E_{\vp}\left[ H(\vp)\right]
\\
\mathbf{B} &= \E_{\vpi}\left[ H(\vpi)\right] 
\\
\Omega(p_\ast) &=  (1-p_\ast) \E_{\vp}\left[ H(\vp) \right] +  p_\ast\E_{\vpi}\left[
  H(\vpi)\right] + H(p_\ast)
\\
&= (1-p_\ast) \mathbf{A} +  p_\ast\mathbf{B} + H(p_\ast)
\end{align*}
so that 
\begin{align}\nonumber
\Omega^2(p_\ast)&= 2p_\ast H(p_\ast)[\mathbf{B}-\mathbf{A}]  +
2\mathbf{A}H(p_\ast) + h^2(p_\ast) \\ &+ p_\ast^2[\mathbf{B}^2 - 2\mathbf{A}\mathbf{B}]
+ 2p_\ast\mathbf{A}\mathbf{B} + (1-p_\ast)^2\mathbf{A}^2\label{eq:Kpstarsquared}
\end{align}
and we note that 
\begin{equation}
  \var_{p_\ast}\left[  \E_{\vpi, \vp}[H(\pi_{\mathrm{post}}) | p_\ast] \Big| d,\alpha\right] = \E_{p_\ast}[\Omega^2(p_\ast)]  - \E_{p_\ast}[\Omega(p_\ast)]^2
\end{equation}
The components composing $\E_{p_\ast}[\Omega(p_\ast)]$, as well as each term of
\eqref{eq:Kpstarsquared} can be found in~\cite{Archer2012b}. Although less
elegant than the posterior mean, the
expressions derived above permit us to compute \eqref{eq:PY:posterior:variance}  numerically from its component expectations, without
sampling.

\section{Proof of Proposition \ref{finiteposterior}}\label{sec:appendix:finiteposterior}
In this Appendix we give a proof for Proposition
\ref{finiteposterior}. 
\begin{proof}
  PYM is given by 
\begin{equation*}
\hat H_{\PYM} = \frac{1}{p(\vx)} \int_0^\infty \int_0^1
    H_{(d,\alpha)} p(\vx | d,\alpha) p(d,\alpha) \dm{\alpha} \dm{d}
\end{equation*}
where we have written $H_{(d,\alpha)} = \E[H | d,\alpha,  \vx]$. Note
that $p(\vx|d,\alpha)$ is the evidence, given by \eqref{eq:evidence}.
We will assume $p(d,\alpha) = 1$ for all $\alpha$ and $d$ to show conditions under which
$H_{(d,\alpha)}$ is integrable for any prior. Using the identity
$\frac{\Gamma(x+n)}{\Gamma(x)} = \prod_{i=1}^n (x+i-1)$ and the log
convexity of the Gamma function we have, 

\begin{align*}
  p(\vx|d,\alpha) &\leq 	
  \prod_{i=1}^K \frac{\Gamma(n_i - d)}{\Gamma(1-d)}
\frac{
    \Gamma(\alpha+K)
  }{
    \Gamma(\alpha + N)
  }
\\
&\leq \frac{\Gamma(n_i - d)}{\Gamma(1-d)} \frac{1}{\alpha^{N-K}}
\end{align*}

Since $d\in[0,1)$, we have from the properties of the
digamma function,
\begin{equation*}
  \psi_0(1-d) = \psi_0(2-d) - \frac{1}{1-d} \geq \psi_0(1) -
  \frac{1}{1-d} = -\gamma - \frac{1}{1-d},
\end{equation*}
and thus the upper bound, 
\begin{align}\label{eq:entropyupper}
  H_{(d,\alpha)} &\leq
  \psi_0(\alpha+N+1) + \frac{\alpha+Kd}{\alpha+N}\left(\gamma +
    \frac{1}{1-d}\right) \\&-
  \frac{1}{\alpha+N} \left[\sum_{i=1}^K (n_i - d) \psi_0(n_i -
    d+1)\right].
\end{align}

Although second term is unbounded in
$d$ notice that $\frac{\Gamma(n_i - d)}{\Gamma(1-d)} =
\prod_{i=1}^{n_i} (i - d)$; thus, so long as $N-K\geq 1$,
$H_{(\alpha,d)} p(\vx | d,\alpha)$ is integrable in $d$. 

For the integral over alpha, it suffices to choose $\alpha_0\gg N$
and consider the tail, $\int_{\alpha_0}^\infty H_{(d,\alpha)} p(\vx |
d,\alpha) p(d,\alpha) \dm{\alpha}$. From \eqref{eq:entropyupper} and
the asymptotic expansion $\psi(x) = \log(x)  - \frac{1}{2x} -
\frac{1}{12x^2} + O(\frac{1}{x^4})$ as $x\to\infty$ we see
that in the limit of $\alpha\gg N$, 
\begin{equation*}
  H_{(d,\alpha)} \leq
    \log(\alpha+N+2) + \frac{c}{\alpha},
\end{equation*}
where $c$ is a constant depending on $K$,  $N$, and $d$. Thus, we have
\begin{align*}
\int_{\alpha_0}^\infty 
    &H_{(d,\alpha)} p(\vx | d,\alpha) p(d,\alpha) \dm{\alpha} \\&\leq
    \frac{\prod_{i=1}^K\Gamma(n_i-d)}{\Gamma(1-d)}\int_{\alpha_0}^\infty
    \left(\log(\alpha+N+2) + \frac{c}{\alpha}\right) \frac{1}{\alpha^{N-K}}\dm{\alpha} 
\end{align*}
and so $H_{(d,\alpha)}$ is integrable in $\alpha$ so long as $N-K\geq 2$.
\end{proof}

\section{Proofs of Consistency Results}
\begin{proof}[proof of Theorem \ref{fixedHconv}]
We have,  
  \begin{align*}
\lim_{N\to\infty} &\E[H | \alpha, d,
  \vx_N]\\ &= \lim_{N\to\infty} \left[ \psi_0(\alpha+N+1) -
    \frac{\alpha+K_Nd}{\alpha+N}\psi_0(1-d) - 
    \right. \\ &\qquad\qquad\left.
    \frac{1}{\alpha+N}
    \left[\sum_{i=1}^{K_N}  (n_i - d) \psi_0(n_i - d+1)\right] \right]
\\
&= \lim_{N\to\infty} \left[ \psi_0(\alpha+N+1) - \sum_{i=1}^{K_N}
    \frac{n_i}{N} \psi_0(n_i - d+1) \right]
\\
&= -\lim_{N\to\infty} \sum_{i=1}^{K_N}
    \frac{n_i}{N} \left[\psi_0(n_i - d+1) - \psi_0(\alpha+N+1) \right]
  \end{align*}

although we have made no assumptions about the tail behavior of
$\vpi$, so long as $\pi_k>0$,  $\E[n_k] = \E[\sum_{i=1}^\infty \ind{x_i = k}] =
\sum_{i=1}^\infty P\{ x_i = k\} =\lim_{N\to\infty} N\pi_k \to \infty$,
and we may apply the asymptotic expansion $\psi(x) = \log(x)  - \frac{1}{2x} -
\frac{1}{12x^2} + O(\frac{1}{x^4})$ as $x\to\infty$ to find, 

\begin{equation*}
\lim_{N\to\infty}  \E[H | \alpha, d,  \vx_N]  = H_{\mathrm{plugin}}
\end{equation*}
\end{proof}

We now turn to the proof of consistency for PYM. Although consistency
is an intuitively plausible property for PYM, due to the form of the
estimator our proof involves a rather detailed technical argument.
Because of this, we break the proof of Theorem \ref{consistencyproof}
into two parts. First, we prove a supporting Lemma.

\begin{lemma}
\label{lem:convergence:head}
    If the data $\vx_N$ have at least two coincidences, and are sampled
    from a distribution such that, for some constant $c>0$, $K_N =
    o(N^{1-1/c})$ in probability, the following sequence of integrals converge.
\begin{align*}
    \int_0^{K_N+c} \int_0^1
    \E[H|\alpha, d, \vx_N] 
    \frac{p(\vx_N | \alpha, d) p(\alpha, d)}{p(\vx_N)}
    \mathrm{d}\alpha \mathrm{d}d
    \xrightarrow{P} \E[\Hplug| \vx_N]
\end{align*}
where $c>0$ is an arbitrary constant. 
\end{lemma}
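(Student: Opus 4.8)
The plan is to push the head integral onto the plugin estimator by combining (a) a uniform version of Theorem~\ref{fixedHconv} on the region where the posterior actually concentrates, with (b) a concentration statement showing the posterior puts asymptotically all its mass on that region. Write $\tilde\mu_N$ for the posterior on $(\alpha,d)$, i.e.\ $\tilde\mu_N(\mathrm d\alpha\,\mathrm d d)=p(\vx_N\mid\alpha,d)\,p(\alpha,d)\,\mathrm d\alpha\,\mathrm d d/p(\vx_N)$ --- a probability measure once $N-K\ge2$ --- and set $H_{d,\alpha}:=\E[H\mid\alpha,d,\vx_N]$. Then
\[
\int_0^{K_N+c}\!\!\!\int_0^1 H_{d,\alpha}\,\tilde\mu_N-\Hplug
 =\underbrace{\int_{\{\alpha\le K_N+c\}}\!\!(H_{d,\alpha}-\Hplug)\,\tilde\mu_N}_{(\mathrm I)}
 +\underbrace{\Hplug\bigl(\tilde\mu_N(\{\alpha\le K_N+c\})-1\bigr)}_{(\mathrm{II})}.
\]
Fix a cutoff $\delta_N\to0$ with $K_N/N\ll\delta_N$, say $\delta_N=N^{-1/(2C)}$, and split $(\mathrm I)$ according to $d\le 1-\delta_N$ versus $d>1-\delta_N$. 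There are thus three pieces to control: the ``bulk'' $\{\alpha\le K_N+c,\ d\le 1-\delta_N\}$, the ``corner'' $\{\alpha\le K_N+c,\ d>1-\delta_N\}$, and the $\alpha$-tail appearing in $(\mathrm{II})$.

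First I would dispatch the bulk piece by proving $\varepsilon_N:=\sup\{|H_{d,\alpha}-\Hplug|:\alpha\le K_N+c,\ d\le 1-\delta_N\}\xrightarrow{P}0$. This is routine bookkeeping with the closed form \eqref{eq:posterior:mean} and the expansion $\psi_0(x)=\log x-\tfrac1{2x}+O(x^{-2})$: $\psi_0(\alpha+N+1)-\log N=O(\alpha/N)=O(K_N/N)$; replacing $\tfrac1{\alpha+N}$ by $\tfrac1N$ and $\psi_0(n_i-d+1)$ by $\log n_i$ in the sum costs $O\!\bigl(K_N(\log N)/N\bigr)$; and the stray terms $\tfrac{d}{\alpha+N}\sum_i\psi_0(n_i-d+1)$ and $\tfrac{\alpha+K_Nd}{\alpha+N}\psi_0(1-d)$ are $O\!\bigl(K_N(\log N)/N\bigr)$ and $O\!\bigl(K_N/(N\delta_N)\bigr)$ respectively, using $|\psi_0(1-d)|=O(1/\delta_N)$ for $d\le1-\delta_N$. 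Since $K_N=o(N^{1-1/C})$ makes $K_N(\log N)/N\to0$ and $K_N/(N\delta_N)=o(N^{-1/(2C)})\to0$, we get $\varepsilon_N\xrightarrow{P}0$; as $\tilde\mu_N$ is a probability measure the bulk piece is bounded by $\varepsilon_N$ and hence $\xrightarrow{P}0$.

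The real content --- and the main obstacle --- is the concentration estimate: that $\tilde\mu_N(\{d>1-\delta_N\})$, weighted by the crude upper bound $|H_{d,\alpha}|+\Hplug=O\!\bigl(\log N+\tfrac{K_N}{N(1-d)}\bigr)$ valid there (via \eqref{eq:entropyupper}), and $\Hplug\cdot\tilde\mu_N(\{\alpha>K_N+c\})$ both tend to $0$ in probability; this annihilates $(\mathrm{II})$ and the corner part of $(\mathrm I)$ at once. I would argue directly from the evidence \eqref{eq:evidence}. Near $d=1$: because $\vx_N$ contains at least one symbol observed $\ge2$ times, $\Gamma(n_i-d)/\Gamma(1-d)=\prod_{j=1}^{n_i-1}(j-d)$ carries a factor $(1-d)$ for each such symbol, so $p(\vx_N\mid\alpha,d)\le (1-d)^{m}G(\alpha)$ near $d=1$ with $m\ge1$ and $G(\alpha)=\tfrac{\Gamma(\alpha+K)}{\Gamma(\alpha+N)}\prod_{i:\,n_i\ge2}(n_i-2)!\le\tfrac{\Gamma(K)}{\Gamma(N)}\prod_{i:\,n_i\ge2}(n_i-2)!$; this makes $\int_{1-\delta_N}^1\tfrac1{1-d}\,p(\vx_N\mid\alpha,d)\,\mathrm d d=O(\delta_N^{m})$ and cancels the $1/(1-d)$ singularity. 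For $\alpha>K_N+c$: with $\ell(\alpha)=K\log\alpha+\log\Gamma(\alpha)-\log\Gamma(\alpha+N)$ one has $\ell'(\alpha)=K/\alpha-\sum_{j=0}^{N-1}(\alpha+j)^{-1}$, so $\ell$ increases up to a maximizer $\hat\alpha_N\sim K_N/\log(N/K_N)$ and then decreases; since $K_N+c>\hat\alpha_N$ for large $N$, the evidence is decreasing past $K_N+c$, and combining this with the far-tail bound $p(\vx_N\mid\alpha,d)\le C_d\,\alpha^{-(N-K)}$ (as in the proof of Proposition~\ref{finiteposterior}, $N-K\ge2$) gives $\int_{K_N+c}^\infty\!\!\int_0^1 p(\vx_N\mid\alpha,d)p(\alpha,d)\le (\text{const})\,N\,e^{\ell(K_N+c)}+(\text{negligible})$. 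The crux is then a matching lower bound $p(\vx_N)\ge (\text{const})\,w_N\,p_-\,e^{\ell(\hat\alpha_N)}$, obtained by integrating the evidence over a width-$w_N$ neighborhood of $(\hat\alpha_N,0)$ on which the fixed, bounded, nontrivial prior stays above some $p_->0$. Since $\ell(\hat\alpha_N)-\ell(K_N+c)$ grows like $K_N\log(N/K_N)$ and $\ell(\hat\alpha_N)-\ell(1)\sim K_N\log K_N$ while $\ell(1)=-\log N!$, the common $\prod_i(n_i-1)!$ factors cancel between the tail/corner numerators and the evidence denominator, and the remaining exponential factors drive the ratios to $0$; it is precisely in matching the $\delta_N$ chosen above against these exponents that the polynomial rate $K_N=o(N^{1-1/C})$ enters. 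Assembling the three pieces gives $(\mathrm I)+(\mathrm{II})\xrightarrow{P}0$, which is the assertion of the lemma. I expect this concentration step --- sharpening the evidence lower bound enough to beat the tail and corner upper bounds uniformly over admissible priors, and tracking the rate threshold --- to be the delicate part, while the uniform-convergence step is essentially a calculation.
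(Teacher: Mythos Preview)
Your strategy is sound but takes a genuinely different route from the paper's. The paper's proof is much shorter and avoids posterior-concentration bookkeeping almost entirely: for the \emph{upper} bound it observes that $\E[H\mid\alpha,d,\vx_N]$ is monotone increasing in $\alpha$, replaces the integrand by its value at the right endpoint $\alpha=K_N+c$, and shows by direct expansion (your ``bulk'' calculation, essentially) that $\E[H\mid K_N+c,d,\vx_N]\le \Hplug+A(d,N)+o(1)$; integrating against the sub-probability $\tilde\mu_N(\{\alpha\le K_N+c\})\le 1$ and killing the $d$-singular piece $A(d,N)$ via the same $(1-d)$-factor you identify finishes it. For the \emph{lower} bound the paper applies Jensen to $x\mapsto e^{-x}$ and dominated convergence with the bound $e^{-H_{(\alpha,d,N)}}\le 1$. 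Thus the paper never needs your term~$(\mathrm{II})$ estimate, i.e.\ it does not prove $\tilde\mu_N(\{\alpha>K_N+c\})\to 0$ inside this lemma --- that tail work is deferred to Theorem~\ref{consistencyproof}. Your plan effectively imports that tail estimate into Lemma~\ref{lem:convergence:head}, making the lemma more self-contained but considerably longer; what you gain is explicitness about where the posterior concentrates, whereas the paper's Jensen step is terse to the point of opacity about the changing measure.

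One point to watch in your concentration step: you lower-bound $p(\vx_N)$ by integrating over a width-$w_N$ window around $(\hat\alpha_N,0)$ and invoking ``the prior stays above some $p_->0$'' there. Since $\hat\alpha_N\sim K_N/\log(N/K_N)\to\infty$, this fails for any \emph{proper} prior $p(\alpha,d)$, whose density must vanish as $\alpha\to\infty$; it is only unproblematic for the bounded improper case. If you want your argument to cover proper priors as well (as Theorem~\ref{consistencyproof} claims), you should instead lower-bound $p(\vx_N)$ on a fixed compact set where the prior has mass --- the paper's Theorem~\ref{consistencyproof} proof does this implicitly by lower-bounding the $\alpha$-integral of the evidence over all of $[0,\infty)$ with the prior set to $1$, then using the common $d$-integral to cancel --- and track how the prior decay competes with $e^{\ell(\hat\alpha_N)-\ell(K_N+c)}$. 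Otherwise the architecture of your proof is fine; the monotonicity-plus-Jensen route simply gets there with less machinery.
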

\begin{proof}
  
Notice first that $E[H|\alpha,d,\vx_N]$ is monotonically increasing in
$\alpha$, and so 

\begin{align*}
  \int_{\alpha=0}^{K_N+c}&\int_{d=0}^1 \E[H|\alpha,d,\vx_N]
  \frac{p(\vx_N|\alpha,d)}{p(\vx_N)} \dm{\alpha}\dm{d} \\&\leq
  \int_{\alpha=0}^{K_N+c}\int_{d=0}^1
  \E[H|K_N+c,d,\vx_N]\frac{p(\vx_N|\alpha,d)}{p(\vx_N)} \dm{\alpha}\dm{d}.
\end{align*}
As a result we have that, 

\begin{align}\label{eq:entropybound_head}
&\E[H | K_N+c, d, \vx_N] = \psi_0(K_N+c+N+1) \\&\qquad\qquad- \frac{
  (1+d)K_N+c}{K_N+N+c}\psi_0(1-d) \nonumber
    \\&\qquad\qquad- \frac{1}{K_N+c+N} \left(\sum_{i=1}^{K_N} (n_i - d) \psi_0(n_i -
      d+1)\right)\nonumber
\end{align}
As a consequence of Proposition \ref{finiteposterior}, $\int_{d=0}^{1}
(1+d) \psi(1-d) \frac{p(\vx|\alpha,d)}{p(\vx_N)} \dm{d}<\infty$, and
so the second term is bounded and controlled by $K_N/N$. We let 
\[
A(d,N) =- \frac{ (1+d)K_N+c}{K_N+N+c}\psi_0(1-d)
\]
and, since $\lim_{N\to \infty} \int_{d=0}^1 A(d,N) \frac{p(\vx|\alpha,d)}{p(\vx_N)}\dm{d}
= 0$, we focus on the
remaining terms of  \eqref{eq:entropybound_head}. We also let
$B(\vn) = \sum_{i=1}^{K_N}\left(\frac{n_i-1}{N}
    \log\left(\frac{n_i}{N}\right)\right)$, and note that $\lim_{N\to
    \infty} B = \Hplug$.  We find that, 
\begin{align*}
  \E[H | &K_N+c, d, \vx_N] \\ &\leq 
\log(N+K_N+c+1)
   +A(d,N) 
   \\ &
   - \sum_{i=1}^{K_N}\left(\frac{n_i-1}{K_N+N +c}
     \log(n_i)\right)
\\
&=\log(N+K_N+c+1) + A(d,N) -
\\&\frac{N}{K_N+N +c}
\left[\sum_{i=1}^{K_N}\left(\frac{n_i-1}{N}
    \log\left(\frac{n_i}{N}\right)\right) +
  \frac{N-K_N}{N}\log(N)\right]
\\
&=\log\left(1 + \frac{K_N+c+1}{N}\right) + A(d,N) 
\\ &\quad + \log(N) \left[ \frac{2K_N+c}{N+K_N+c} \right]+ \frac{N}{K_N+N +c}
B
\\
&=\log\left(1 + \frac{K_N+c+1}{N}\right) + A(d,N)  
\\ &\quad+ \frac{1}{1+(K_N+c)/N}\frac{2K_N+c}{N^{1-1/C}}\frac{\log(N)}{N^{1/C}}+ \frac{N}{K_N+N +c}
B
\\
&\to  \Hplug+ o(1)
\end{align*}

As a result, 
\begin{align*}
  \int_{\alpha=0}^{K_N+c}\int_{d=0}^1 &\E[H|\alpha,d,\vx_N]
  \frac{p(\vx_N|\alpha,d)}{p(\vx_N)}\dm{d} \dm{\alpha} \\&\leq
  \left[ \Hplug
    \int_{\alpha=0}^{K_N+c} \int_{d=0}^1 \frac{p(\vx_N|\alpha,d)}{p(\vx_N)}
    \dm{d}\dm{\alpha} + o(1)\right]
\\
&\to \Hplug
\end{align*}

For the lower bound, we let $H_{(\alpha,d,N)} =
\E[H|\alpha,d,\vx_N]\indnobr{[0,K_N+c]}(\alpha)$. Notice that $\exp(-H_{(\alpha,d,N)}) \leq 1$, so by dominated convergence
$\lim_{N\to\infty} \E[\exp(-H_{(\alpha,d,N)})] = \exp(-\Hplug)$ by
Proposition \ref{finiteposterior}. And so by
Jensen's inequality,

\begin{align*}
  \exp(-\lim_{N\to\infty}\E[H_{(\alpha,d,N)}]) &\leq
  \lim_{N\to\infty}\E[\exp(-H_{(\alpha,d,N)})] = \exp(-\Hplug)
\\
\implies & \lim_{N\to\infty} \E[H_{(\alpha,d,N)}]  \geq \Hplug,
\end{align*}

and the lemma follows. 
\end{proof}

~\\
We now turn to the proof of our primary consistency result. \\

\begin{proof}[proof of Theorem \ref{consistencyproof}]

\begin{align*}
    \iint &\E[H|\alpha, d, \vx_N] 
    \frac{p(\vx_N | \alpha, d) p(\alpha, d)}{p(\vx_N)}
    \mathrm{d}\alpha \mathrm{d}d
    \\
    &= 
    \int_0^{\alpha_0} \int_0^1
    \E[H|\alpha, d, \vx_N] 
    \frac{p(\vx_N | \alpha, d) p(\alpha, d)}{p(\vx_N)}
    \mathrm{d}\alpha \mathrm{d}d
    \\
    &\quad+
    \int_{\alpha_0}^\infty \int_0^1
    \E[H|\alpha, d, \vx_N] 
    \frac{p(\vx_N | \alpha, d) p(\alpha, d)}{p(\vx_N)}
    \mathrm{d}\alpha \mathrm{d}d
\end{align*}
If we let $\alpha_0 = K_N+1$,
by Lemma \ref{lem:convergence:head},
\begin{align*}
    \int_0^{\alpha_0} \int_0^1
    \E[H|\alpha, d, \vx_N] 
    \frac{p(\vx_N | \alpha, d) p(\alpha, d)}{p(\vx_N)}
    \mathrm{d}\alpha \mathrm{d}d
    \to \E[H_\mathrm{plugin}| \vx_N].
\end{align*}
Therefore, it remains to show that 
\begin{align*}
    \int_{\alpha_0}^\infty \int_0^1
    \E[H|\alpha, d, \vx_N] 
    \frac{p(\vx_N | \alpha, d) p(\alpha, d)}{p(\vx_N)}
    \mathrm{d}\alpha \mathrm{d}d 
    \to 0
\end{align*}
For finite support distributions where $K_N \to K < \infty$, this is trivial.
Hence, we only consider infinite support distributions where $K_N \to \infty$.
In this case, there exists $N_0$ such that for all $N \geq N_0$, $p([0, K_N+1], [0, 1)) \neq 0$.

Since $p(\alpha, d)$ has a decaying tail as $\alpha \to \infty$, $\exists N_0 \forall N \geq N_0$, $p(K_N+1, d) \leq 1$, thus, it is sufficient demonstrate convergence under an improper prior $p(\alpha, d) = 1$.

Using,
\begin{align*}
    \E[H|\alpha, d, \vx_N]  \leq \psi_0(N + \alpha + 1) \leq N + \alpha
\end{align*}
we bound
\begin{align*}
    \int_{\alpha_0}^\infty \int_0^1
    \E[H|\alpha, d, \vx_N] 
    &\frac{p(\vx_N | \alpha, d)}{p(\vx_N)}
    \mathrm{d}\alpha \mathrm{d}d 
    \\&\leq
    \frac{
    \int_{\alpha_0}^\infty \int_0^1
    (N + \alpha - 1) p(\vx_N | \alpha, d)
    \mathrm{d}\alpha \mathrm{d}d 
    }{p(\vx_N)}
    \\&\qquad+
    \frac{
    \int_{\alpha_0}^\infty \int_0^1
    p(\vx_N | \alpha, d)
    \mathrm{d}\alpha \mathrm{d}d 
    }{p(\vx_N)}
\end{align*}
We focus upon the first term on the RHS since its boundedness implies that of the smaller second term.
Recall, that $p(\vx) = \int_{\alpha=0}^{\infty} \int_{d=0}^1 p(\vx|\alpha,d) \dm d \dm \alpha$.
We seek an upper bound for the numerator and a lower bound for $p(\vx_N)$.

\textit{Upper Bound:} First we integrate over $d$ to find the upper
bound of the numerator. (For the following display only we let $\gamma(d) = \prod_{i=1}^{K_N} \Gamma(n_i - d)$).
\begin{align*}
    &\int_{\alpha_0}^\infty \int_0^1
    (N + \alpha - 1)
    p(\vx_N | \alpha, d)
    \mathrm{d}\alpha \mathrm{d}d 
    \\&=
    \int_{\alpha_0}^\infty \int_{d=0}^1	
    \frac{
      \left(
        \prod_{l=1}^{K_N-1} (\alpha + ld)
      \right)
      \gamma(d)
      \Gamma(1 + \alpha)
	(N + \alpha - 1)
    }{
      \Gamma(1 - d)^{K_N}
      \Gamma(\alpha + N)
    } \dm d \dm \alpha 
    \\
    &\leq
    \int_{d=0}^1
    \frac{
      \gamma(d)
    }{
      \Gamma(1 - d)^{K_N}
    }
    \dm{d}
    \int_{\alpha_0}^\infty
    \frac{
      \Gamma(\alpha  + K_N)
	(N + \alpha - 1)
    }{
      \Gamma(\alpha + N)
    } \dm \alpha 
\end{align*}
Fortunately, the first integral on $d$ will cancel with a term from the lower bound of $p(\vx_N)$.
Using\footnote{Note that in the argument for the inequalities we use
  $K$ rather than $K_N$ for clarity of notation.},
$
\frac{ (N + \alpha - 1)\Gamma(\alpha + K_N)}{\Gamma(\alpha+N)} 
= \frac{\Beta(\alpha+K_N,N-K-1)}{\Gamma(N-K-1)}
$,
\begin{align*}
  &\int_{\alpha_0}^{\infty}
  \frac{
    (N + \alpha - 1)
  \Gamma(\alpha+K)}{\Gamma(\alpha+N)} \dm \alpha \\&=
  \frac{1}{\Gamma(N-K-1)}
  \int_{\alpha_0}^\infty \Beta(\alpha+K,N-K-1) \dm \alpha
\\
&=
  \frac{1}{\Gamma(N-K-1)}
  \int_{\alpha_0}^\infty \int_{0}^1 t^{\alpha+K-1}(1-t)^{N-K-2} \dm t \dm \alpha
\\
&=
  \frac{1}{\Gamma(N-K-1)}
  \int_{t=0}^1 \frac{t^{\alpha_0+K-1}}{\log(\frac{1}{t})}(1-t)^{N-K-2} \dm{t}
\\
&\leq
  \frac{1}{\Gamma(N-K-1)}
\int_{t=0}^1 \frac{t^{\alpha_0+K-1}}{(1-t)}(1-t)^{N-K-2} \dm t
\\
&=
  \frac{1}{\Gamma(N-K-1)}
\Beta(\alpha_0 + K, N-K-2)
\\
&=
\frac{1}{\Gamma(N-K-1)}
\frac{ \Gamma(\alpha_0 + K) \Gamma(N-K-2)}{\Gamma(N + \alpha_0 - 2)}
\\
&=
\frac{ \Gamma(\alpha_0 + K) }{\Gamma(N + \alpha_0 - 2)(N-K-2)}
\end{align*}

\textit{Lower Bound:} 
Again, we first integrate $d$,
\begin{align*}
  &\int_{\alpha=0}^{\infty} \int_{d=0}^1 p(\vx|\alpha,d) \dm d \dm \alpha
  \\&=
    \int_{\alpha=0}^{\infty} \int_{d=0}^1
    \frac{
      \left(
        \prod_{l=1}^{K-1} (\alpha + ld)
      \right)
      \left(
        \prod_{i=1}^{K} \Gamma(n_i - d)
      \right)
      \Gamma(1 + \alpha)
    }{
      \Gamma(1 - d)^{K}
      \Gamma(\alpha + N)
    } \dm d \dm \alpha
\\
&=
    \int_{d=0}^1
    \frac{
      \left(
        \prod_{i=1}^{K} \Gamma(n_i - d)
      \right)
    }{
      \Gamma(1 - d)^{K}
    }
    \dm{d}
    \int_{\alpha=0}^{\infty}
    \frac{
      \alpha^{K-1}\Gamma(1 + \alpha)
    }{
      \Gamma(\alpha + N)
    }\dm \alpha
\end{align*}

So, since $\frac{\Gamma(1+\alpha)}{\Gamma(\alpha+N)} =
\frac{\Beta(1+\alpha,N-1)}{\Gamma(N-1)}$, then 

\begin{align*}
\Gamma(N-1)
  \int_{\alpha=0}^\infty &\frac{\alpha^{K-1}\Gamma(1+\alpha)}{\Gamma(\alpha+N)} \dm 
  \alpha \\&\geq
  \int_{\alpha=0}^\infty
  \alpha^{K-1} \Beta(1+\alpha, N-1)
  \dm \alpha
  \\
  &=
  \int_{\alpha=0}^\infty
  \alpha^{K-1} \int_{t=0}^1 t^{\alpha} (1-t)^{N-2} \dm t \dm \alpha
  \\
  &=
  \int_{t=0}^1 
  (1-t)^{N-2}
  \int_{\alpha=0}^\infty
  \alpha^{K-1} t^{\alpha} \dm \alpha \dm t 
  \\
  &=
 \Gamma(K) 
  \int_{t=0}^1 
  (1-t)^{N-2}
  \log\left(\frac{1}{t}\right)^{-K} \dm t 
  \\
  &\geq
  \Gamma(K) 
  \int_{t=0}^1 
  (1-t)^{N-K-2}
  t^{K} \dm t 
  \\
  &= 
 \Gamma(K) 
  \Beta(N-K-1, K+1)
\end{align*}
where we've used the fact that $\log(\frac{1}{t})\inv \geq \frac{t}{1-t}$.
Finally, we obtain the bound, 
\begin{align*}
    \int_{\alpha=0}^\infty \frac{\alpha^{K_N-1}\Gamma(1+\alpha)}{\Gamma(\alpha+N)} \dm \alpha   
  &\geq
  \frac{
    \Gamma(K)\Gamma(N-K-1) \Gamma(K+1)
  }{
    \Gamma(N-1)\Gamma(N)
  }.
\end{align*}
Now, we apply the upper and lower bounds to bound PYM.  We have, 
\begin{align*}
    &\frac{
    \int_{\alpha_0}^\infty \int_0^1
    (N + \alpha - 1) p(\vx_N | \alpha, d)
    \mathrm{d}\alpha \mathrm{d}d 
    }{p(\vx_N)}
    \\&\leq
\frac{
    \Gamma(\alpha_0 +K_N)
}{
    (N-K_N-2)\Gamma(N+\alpha_0 - 2)
}
\frac{
    \Gamma(N-1)\Gamma(N)
}{
    \Gamma(K_N)\Gamma(N-K_N-1) \Gamma(K_N+1)
}
\\
&= 
\frac{
1
}{
  (N-K_N-2)
}
\frac{
  \Gamma(\alpha_0 +K_N) 
}{
  \Gamma(K_N) 
}
\frac{
  \Gamma(N-1)
}{
  \Gamma(N+\alpha_0 - 2) 
}
\\&\qquad \qquad \qquad\qquad \qquad \qquad \qquad\times
\frac{ 
  \Gamma(N)
}{
  \Gamma(N-K_N-1) \Gamma(K_N+1)
} 
\\
&\to
\frac{
N
}{
  (N-K_N-2)
}
\left(\frac{
 K_N
}{
  N
}\right)^{\alpha_0}
\frac{ 
  N^{N-1/2}
}{
  (N-K_N-1)^{N-K_N-3/2} (K_N+1)^{K_N+1/2}
} 
\\
&=
\frac{
    N^2
}{
    (K_N+1)^{1/2}  (N-K_N-2)
}
\left(\frac{
    K_N
}{
    N
}\right)^{\alpha_0}
\left(
\frac{ 
  N
}{
  N-K_N-1
} \right)^{N-3/2}
\\&\qquad \qquad \qquad\qquad \qquad \qquad \qquad\times
\left(
\frac{ 
N-K_N-1
}{
K_N+1
} \right)^{K_N}
\\
&\to
\frac{
N
}{
(K_N+1)^{1/2} 
}
\left(\frac{
    K_N
}{
    N
}\right)^{\alpha_0}
\left(
\frac{ 
    N
}{
    K_N
} \right)^{K_N}
\end{align*}
Where we have applied the asymptotic expansion for the Beta function, 
\[
\Beta(x,y) \sim \sqrt {2\pi } \frac{{x^{x - \frac{1}{2}} y^{y - \frac{1}{2}} }}{{\left( {x + y} \right)^{x + y - \frac{1}{2}} }},
\]a consequence of Stirling's formula. Finally, we take $\alpha_0 = K_N
+ (C+1)/2$ so that the limit becomes,
\begin{align*}
&\to \frac{
N
}{
K_N^{1/2}
}
\left(\frac{
 K_N
}{
  N
}\right)^{(C+1)/2}
\\
&=
\frac{
    K_N^{C/2}
}{
    N^{C/2-1/2}
}
\end{align*}
which tends to $0$ with increasing $N$ since, by assumption, $K_N=o(N^{1-1/C})$. 
\end{proof}

\makeatletter{}

\section{Results on Unimodality of Evidence}\label{sec:appendix:unimodal}
\begin{theorem}[Unimodal evidence on $d$] \label{thm:unimodal:evidence:d}
    The evidence $p(\vx|d,\alpha)$ given by \eqref{eq:evidence} has only one local maximum (unimodal) for a fixed $\alpha>0$.
\end{theorem}
\begin{proof}
Equivalently, we show that the log evidence is unimodal.
\begin{align*}
    L &= \log p(\vx|d,\alpha) 
    \\
    &=
	\sum_{l=1}^{K-1} \log(\alpha + ld)
	+
	\sum_{i=1}^K \log\Gamma(n_i - d)
	+
	\log\Gamma(1 + \alpha)
	-
	K \log\Gamma(1 - d)
	-
	\log\Gamma(\alpha + N)
\end{align*}
It is sufficient to show that the partial derivative w.r.t. $d$ has at most one postive root.
\begin{align*}
    \D{L}{d}
    &=
	\sum_{l=1}^{K-1} \frac{l}{\alpha + ld}
	-
	\sum_{i=1}^K
	\left(
	    \psi_0(n_i - d)
	    -
	    \psi_0(1 - d)
	\right)
    \\
    &=
	\sum_{l=1}^{K-1} \frac{l}{\alpha + ld}
	+
	\sum_{i=1}^K
	\sum_{j=1}^{n_i-1}
	\frac{1}{d - j}
\end{align*}
Note that as $d \to 1$, the derivative tends to $-\infty$.
Combined with the observation that it is a linear combination of convex functions, there is at most one root for $\D{L}{d} = 0$.
\end{proof}

\begin{theorem}[Unimodal evidence on $\alpha$] \label{thm:unimodal:evidence:alpha}
    The evidence $p(\vx|d,\alpha)$ given by \eqref{eq:evidence} has only one local maximum (unimodal), on the region $\alpha>0$, for a fixed $d$.
\end{theorem}
\begin{proof}
Similar to theorem~\ref{thm:unimodal:evidence:d}, it is sufficient to show that the partial derivative w.r.t. $\alpha$ has at most one postive root.
\begin{align*}
    \D{L}{\alpha}
    &=
	\sum_{l=1}^{K-1} \frac{1}{\alpha + ld}
	+
	\psi_0(1 + \alpha)
	-
	\psi_0(\alpha + N)
    \\
    &=
	\sum_{l=1}^{K-1} \frac{1}{\alpha + ld}
	-
	\sum_{j=1}^{N-1} \frac{1}{j + \alpha}
\end{align*}
Let $\alpha = \frac{1}{x}$ be a root, then,
\begin{align}
    \sum_{i=1}^{K-1} \frac{1}{1 + xid} = \sum_{j=1}^{N-1} \frac{1}{1 + xj}.
\end{align}
Note that since $xid < xi$, $\frac{1}{1 + xid} > \frac{1}{1 + xi}$ for $1 \leq i \leq K-1$.
Therefore, we can split the equality as follows:
\begin{align}
    f_i(x) = a_i \frac{1}{1 + xid} &= \frac{1}{1 + xj} = g_i(x) \quad &\text{for $i \leq K-1$}\\
    f_{ij}(x) = b_{ij} \frac{1}{1 + xid} &= c_{ij} \frac{1}{1 + xj} = g_{ij}(x) \quad &\text{for $i \leq K-1$ and $K < j < N$}
\end{align}
where $0 \leq a_i, b_{ij}, c_{ij} \leq 1$,
$\forall i < K,\; a_i + \sum_j b_{ij} = 1$, and
$\forall j < N,\; \sum_i c_{ij} = 1$.
Fix $a_i, b_{ij}, c_{ij}$'s, and now suppose $\frac{1}{y} > \frac{1}{x} > 0$ is another positive root.
Then, we observe the following strict inequalities due to $0 \leq d < 1$,
\begin{align}
    \frac{f_i(x)}{f_i(y)} = \frac{1 + yid}{1 + xid} &< \frac{1 + yj}{1 + xj} = \frac{g_i(x)}{g_i(y)} \quad &\text{for $i \leq K-1$}\\
    \frac{f_{ij}(x)}{f_{ij}(y)} = \frac{1 + yid}{1 + xid} &< \frac{1 + yj}{1 + xj} = \frac{g_{ij}(x)}{g_{ij}(y)} \quad &\text{for $i \leq K-1$ and $K < j < N$}
\end{align}
Using lemma~\ref{lem:inequality:ratioSum} to put the sum back together, we obtain,
\begin{align}
    \sum_{i=1}^{K-1} \frac{1}{1 + yid} &> \sum_{j=1}^{N-1} \frac{1}{1 + yj}.
\end{align}
which is a contradiction to our assumption that $\frac{1}{y}$ is a positive root.
\end{proof}

\begin{lemma}\label{lem:inequality:ratioSum}
    If $f_j,g_j>0$, $f_j(x)=g_j(x)$ and $\frac{f_j(y)}{f_j(x)} > \frac{g_j(y)}{g_j(x)}$ for all $j$, then $\sum_j f_j(y)>\sum_j g_j(y)$.
\end{lemma}

\section*{Acknowledgments}
  We thank E.~J.~Chichilnisky, A.~M.~Litke, A.~Sher and J.~Shlens for
  retinal data, and Y.~W.~Teh and A.~Cerquetti for helpful comments on the manuscript.
  This work was supported by a Sloan Research Fellowship, McKnight
  Scholar's Award, and NSF CAREER Award IIS-1150186 (JP). Parts of
  this manuscript were presented at the Advances in Neural Information
  Processing Systems (NIPS) 2012 conference.

\end{document}